\documentclass[english,aps,prl,twocolumn,groupedaddress,superscriptaddress,nofootinbib]{revtex4-1}
\usepackage{amsfonts}
\usepackage{amssymb}
\usepackage{amsmath}
\usepackage{color}
\usepackage{dcolumn}
\usepackage{bm}
\usepackage{babel}

\usepackage{amsmath}
\usepackage{graphicx}
\usepackage{amsfonts}
\usepackage{amssymb}
\usepackage{color}
\usepackage{mathrsfs}
\usepackage{isomath}
\usepackage{amsmath}
\usepackage{amsthm}
\usepackage{epstopdf}
\usepackage{txfonts}
\usepackage{float}
\usepackage{dsfont}

\usepackage[all]{xy}
\usepackage[colorinlistoftodos]{todonotes}
\usepackage[colorlinks=true, linkcolor=blue, citecolor=red]{hyperref}

\newcommand{\Tr}{\mbox{Tr}}

\def\be{\begin{equation}}
\def\ee{\end{equation}}
\def\ba{\begin{array}}
\def\ea{\end{array}}
\def\Tr{\mathrm{Tr}}

\newtheorem{thm}{Theorem}
\newtheorem{lem}[thm]{Lemma}
\newtheorem{cor}[thm]{Corollary}

\newtheorem{definition}{Definition}

\newtheorem{example}{Example}

\newcommand{\p}{\mathbf{p}}
\newcommand{\q}{\mathbf{q}}
\newcommand{\ab}{\mathbf{a}}
\newcommand{\bb}{\mathbf{b}}
\newcommand{\cb}{\mathbf{c}}
\newcommand{\db}{\mathbf{d}}
\newcommand{\rrr}{\mathbf{r}}
\newcommand{\sss}{\mathbf{s}}
\newcommand{\ttt}{\mathbf{t}}
\newcommand{\x}{\mathbf{x}}
\newcommand{\y}{\mathbf{y}}

\makeatletter
\newcommand*\bigcdot{\mathpalette\bigcdot@{.5}}
\newcommand*\bigcdot@[2]{\mathbin{\vcenter{\hbox{\scalebox{#2}{$\m@th#1\bullet$}}}}}
\makeatother


\usepackage{tikz}
\usetikzlibrary{patterns}
\makeatletter
\def\grd@save@target#1{%
  \def\grd@target{#1}}
\def\grd@save@start#1{%
  \def\grd@start{#1}}
\tikzset{
  grid with coordinates/.style={
    to path={%
      \pgfextra{%
        \edef\grd@@target{(\tikztotarget)}%
        \tikz@scan@one@point\grd@save@target\grd@@target\relax
        \edef\grd@@start{(\tikztostart)}%
        \tikz@scan@one@point\grd@save@start\grd@@start\relax
        \draw[minor help lines,magenta] (\tikztostart) grid (\tikztotarget);
        \draw[major help lines] (\tikztostart) grid (\tikztotarget);
        \grd@start
        \pgfmathsetmacro{\grd@xa}{\the\pgf@x/1cm}
        \pgfmathsetmacro{\grd@ya}{\the\pgf@y/1cm}
        \grd@target
        \pgfmathsetmacro{\grd@xb}{\the\pgf@x/1cm}
        \pgfmathsetmacro{\grd@yb}{\the\pgf@y/1cm}
        \pgfmathsetmacro{\grd@xc}{\grd@xa + \pgfkeysvalueof{/tikz/grid with coordinates/major step}}
        \pgfmathsetmacro{\grd@yc}{\grd@ya + \pgfkeysvalueof{/tikz/grid with coordinates/major step}}
        \foreach \x in {\grd@xa,\grd@xc,...,\grd@xb}
        \node[anchor=north] at (\x,\grd@ya) {\pgfmathprintnumber{\x}};
        \foreach \y in {\grd@ya,\grd@yc,...,\grd@yb}
        \node[anchor=east] at (\grd@xa,\y) {\pgfmathprintnumber{\y}};
      }
    }
  },
  minor help lines/.style={
    help lines,
    step=\pgfkeysvalueof{/tikz/grid with coordinates/minor step}
  },
  major help lines/.style={
    help lines,
    line width=\pgfkeysvalueof{/tikz/grid with coordinates/major line width},
    step=\pgfkeysvalueof{/tikz/grid with coordinates/major step}
  },
  grid with coordinates/.cd,
  minor step/.initial=.2,
  major step/.initial=1,
  major line width/.initial=2pt,
}
\makeatother

\usetikzlibrary{arrows, decorations.markings}

\tikzstyle{vecArrow} = [thick, decoration={markings,mark=at position
   1 with {\arrow[semithick]{open triangle 60}}},
   double distance=1.4pt, shorten >= 5.5pt,
   preaction = {decorate},
   postaction = {draw,line width=1.4pt, fill=white, opacity=0,shorten >= 4.5pt}]
\tikzstyle{innerWhite} = [semithick, white,line width=1.4pt, shorten >= 4.5pt]


\begin{document}

\title{$\mbox{The Uncertainty Principle of Quantum Processes}$}

\author{Yunlong Xiao}
\email{mathxiao123@gmail.com}
\affiliation{School of Physical and Mathematical Sciences, Nanyang Technological University, Singapore 639673, Singapore}
\affiliation{Complexity Institute, Nanyang Technological University, Singapore 639673, Singapore}
\affiliation{Department of Mathematics and Statistics, University of Calgary, Calgary, Alberta T2N 1N4, Canada}
\affiliation{Institute for Quantum Science and Technology, University of Calgary, Calgary, Alberta, T2N 1N4, Canada}

\author{Kuntal Sengupta}
\email{kuntal.sengupta.in@gmail.com}
\affiliation{Department of Mathematics and Statistics, University of Calgary, Calgary, Alberta T2N 1N4, Canada}
\affiliation{Institute for Quantum Science and Technology, University of Calgary, Calgary, Alberta, T2N 1N4, Canada}

\author{Siren Yang}
\affiliation{Department of Mathematics and Statistics, University of Calgary, Calgary, Alberta T2N 1N4, Canada}
\affiliation{Institute for Quantum Science and Technology, University of Calgary, Calgary, Alberta, T2N 1N4, Canada}

\author{Gilad~Gour}
\email{gour@ucalgary.ca}
\affiliation{Department of Mathematics and Statistics, University of Calgary, Calgary, Alberta T2N 1N4, Canada}
\affiliation{Institute for Quantum Science and Technology, University of Calgary, Calgary, Alberta, T2N 1N4, Canada}


\begin{abstract}
Heisenberg's uncertainty principle, which imposes intrinsic restrictions on our ability to predict the outcomes of incompatible quantum measurements to arbitrary precision, demonstrates one of the key differences between classical and quantum mechanics. The physical systems considered in the uncertainty principle are static in nature and described mathematically with a quantum state in a Hilbert space. However, many physical systems are dynamic in nature and described with the formalism of a quantum channel.
In this paper, we show that the uncertainty principle can be reformulated to include process-measurements that are performed on quantum channels. Since both quantum states and quantum measurements are themselves special cases of quantum channels, our formalism encapsulates the uncertainty principle in its utmost generality. More specifically, we obtain expressions that generalize the Maassen-Uffink uncertainty relation and the universal uncertainty relations from quantum states to quantum channels.
\end{abstract}


\maketitle


\textit{Introduction.-} 
Counter-intuitive as it may seem, the uncertainty principle has been firmly rooted as a fundamental restriction that lies in the heart of quantum mechanics \cite{Heisenberg1927}. The amount of information one can extract from a quantum system, at any given time, depends on the extent of the incompatibility of the underlying measurements involved. In the Hiesenberg's uncertainty principle, this corresponds to the fact that any attempt to measure the position of a quantum particle with very high precision, comes at a cost of poor precision in the simultaneous measurement of its momentum. This fundamental distinction from classical physics has led to an enormous research in the area and found a plethora of applications in quantum key distribution~\cite{berta2010heisenberg}, and the detection of quantum resources~\cite{RevModPhys.91.025001}, such as entanglement~\cite{PhysRevA.68.032103,PhysRevA.68.034307,PhysRevLett.92.117903,PhysRevA.70.022316,PhysRevLett.119.170404,PhysRevLett.122.220401}, Einstein- Podolsky-Rosen steering~\cite{PhysRevA.40.913,PhysRevA.87.062103,PhysRevLett.118.020402,PhysRevA.97.052307,xiao2018quasifinegrained,PhysRevA.98.050104}, and Bell nonlocality~\cite{Oppenheim1072}. 

The uniqueness and immense potential in quantum uncertainty has caused Heisenberg's uncertainty principle -- which was mathematically formulated by Kennard~\cite{Kennard1927} (also refer to Weyl~\cite{weyl1928gruppentheorie}) -- to go through multiple refinements over the last century. One such example is the use of R\'enyi entropies to formulate uncertainty relations from an information-theoretic perspective, by Maassen and Uffink \cite{PhysRevLett.60.1103} (based on the Riesz theorem \cite{hardy1952inequalities}):
\begin{align}\label{mu}
\mathrm{H}_{\alpha} (M) + \mathrm{H}_{\beta} (N)
\geqslant -2\log c \left( M , N \right).
\end{align}
Here $\mathrm{H}_{\alpha} (M) := \frac{1}{1-\alpha}\log (\sum_{x} p_{x}^{\alpha})$ stands for the R\'enyi entropy with order $\alpha>0$, where $\p=\{p_x\}$ is the probability vector corresponding to the outcomes of the measurement $M$, when performed on a system in a state $\rho$. The R\'enyi parameters $\alpha$ and $\beta$ are chosen such that $1/\alpha + 1/\beta = 2$. The constant $c \left( M , N \right)$ stands for the maximal overlap between the measurements $M$ and $N$, and is independent on the state $\rho$. 

More recently, the authors of \cite{PhysRevLett.111.230401} showed that not only entropic functions, but any non-negative Schur-concave function is a suitable uncertainty quantifier for the probabilities obtained from measurements, giving rise to a class of infinitely many uncertainty relations, namely universal uncertainty relations (UURs). It is critical to note that the classification of UURs with respect to the joint uncertainty they present is a major focus in the theory of uncertainty relations \cite{yuan2019strong}. In particular, while considering the two probability distributions $\p$ and $\q$ obtained by measuring quantum state $\rho$ with respect to measurements $M$ and $N$, their joint uncertainty based on direct-product \cite{PhysRevLett.111.230401,Pucha_a_2013},
\begin{align}\label{uur}
\p\otimes\q &\prec \bb_{\otimes},
\end{align}
 demonstrates a spatially-separated type of uncertainties \cite{yuan2019strong}. Here $\bb_{\otimes}$ is a probability vector independent of the initial state $\rho$, and hence quantifies the inherent incompatibility between the measurements and ``$\prec$'' stands for majorization (For $\x = (x_{k})_{k}$, $\y = (y_{k})_{k} \in \mathds{R}^{d}$, we have $\x \prec \y$ whenever $\sum_{k=1}^{i} x_{k}^{\downarrow} \leqslant \sum_{k=1}^{i} y_{k}^{\downarrow}$ for all $1 \leqslant i \leqslant d-1$ and $\sum_{k=1}^{d} x_{k} = \sum_{k=1}^{d} y_{k}$, where the downarrow $^{\downarrow}$ means the components of corresponding vector are arranged in non-increasing order). The approach of majorization adopted by \cite{PhysRevLett.111.230401,Pucha_a_2013} frees us from particular measures and captures the essence of uncertainty in quantum mechanics. Alternatively, investigators have also sought to express the joint uncertainty through direct-sum \cite{PhysRevA.89.052115}
\begin{align}\label{uur2}
\p\oplus\q &\prec \bb_{\oplus},
\end{align}
which reveals a temporally-separated joint uncertainty \cite{yuan2019strong}, where the vector $\bb_{\oplus}$ is also independent of the initial state.

The tendency of quantum states is to evolve. In any realizable experimental setup, all state preserving and transforming operations come with an error. For example, a state preserving experimental apparatus whose only job is to retain the state of the quantum system over a short time, mathematically doing nothing to the quantum state, always accounts for its tolerance, precision and the least count(s) of the measuring device(s) associated with it. Therefore, errors are simply inevitable in any experiment. This error is the device-dependent unavoidable evolution of a quantum system : a quantum process. A complete formulation of the uncertainty principle, therefore, must involve the study of uncertainty relations for quantum processes. Moreover, since quantum processes generalize quantum states and measurements on states, this study will allow us to extend the notion of quantum uncertainty to a theory of quantum mechanics modelled solely by quantum processes.

This paper is organized as follows. First, we give a brief introduction of terminologies and background information on the process positive-operator-valued measure, i.e. PPOVM, which will be useful throughout. The very first step of quantifying uncertainty of a quantum state is to perform incompatible measurements on it and extract sets of probability distributions corresponding to those measurements. Therefore, we next describe what it means to measure a quantum channel. Secondly, we propose a R\'enyi entropic uncertainty relation followed by direct-sum and direct-product UURs for quantum processes. The bounds of all these three relations are independent of the process at hand. Moreover, our results generalize the celebrated uncertainty relations (\ref{mu}), (\ref{uur}), (\ref{uur2}) for quantum states to quantum processes, and the extension of our result to multiple process-channel measurements is straightforward. Examples to support our result are also provided. Finally, we address a number of interesting directions of future investigations which have a close connection to the framework explored here. 


\textit{Preliminaries.-}
For a finite dimensional Hilbert space $\mathcal{H}$, the set of all linear transformation taking the Hilbert space to itself is denoted by $L (\mathcal{H})$. An operator $\rho \in L (\mathcal{H})$ is a density operator, representing a quantum state, if it is  positive semi-definite and has unit trace, i.e., $\rho \geqslant 0$ and $\Tr [\rho] = 1$. We denote the collection of all density operators on $\mathcal{H}$ as $D (\mathcal{H})$. A quantum effect $M_{x}$ on $D (\mathcal{H})$ is an operator such that $0 \leqslant M_{x} \leqslant \mathds{1}$, where, $\mathds{1}$ denotes the identity matrix on $D (\mathcal{H})$. The probability $p_x$ of an outcome $x$ as a result of the effect $M_x$ acting on a density operator $\rho$ is given by $\Tr [M_{x} \, \rho]$. A positive-operator-valued measure (POVM) $M = \{ M_{x} \}_{x}$ is a set of effects that collectively sum to the identity $\sum_{x} M_{x} = \mathds{1}$. As one can easily see that this makes $\{p_x\}_x$ a true set of probabilities. It is important to point out here that although a density operator fully characterizes the statistical properties of the corresponding quantum state at a given time, it is with the help of the measurements that information contained in the state can be retrieved. 

 A superoperator $\mathrm{\Psi}$ maps operators of one Hilbert space to operators of another Hilbert space. Naturally, for Hilbert spaces $\mathcal{H}^A$ and $\mathcal{H}^B$,  $\mathrm{\Psi} (L (\mathcal{H}^{A})) \subset L (\mathcal{H}^{B})$ or simply $\mathrm{\Psi} : A \rightarrow B$, and their collection is denoted by $T( A , B ) := \{ \mathrm{\Psi} \, | \, \mathrm{\Psi} : A \rightarrow B \}$. A superoperator $\mathrm{\Psi} \in T( A , B )$ is said to be a quantum channel if it is (\romannumeral1) completely positive (CP, i.e., $\mathds{1}^{R} \otimes \mathrm{\Psi}$ is positive for all finite dimensional Hilbert space $\mathcal{H}^{R}$), and (\romannumeral2) trace-preserving (TP, i.e. $\Tr_{B} [ \mathrm{\Psi} ( \bigcdot ) ] = \Tr_{A} [ \bigcdot ]$). It is crucial to note here that a quantum channel is a special quantum process which preserves probabilities, a deterministic process, and in this work we are only interested in studying the indeterministic phenomena associated with incompatible measurements on deterministic processes, thereby truly capturing the essence of uncertainty. We use $\text{CPTP}( A , B )$ to denote the collection of all CPTP maps from space $L (\mathcal{H}^{A})$ to space $L (\mathcal{H}^{B})$. 
 
 Now let us look at states and measurements from the perspective of quantum channels : a quantum state $\rho$ can be seen as the state-preparation channel $\mathrm{\Gamma}_{\rho}: \mathds{C} \rightarrow \mathcal{H}$, and a POVM $M = \{ M_{x}\}_{x}$ is equivalent to the measurement channel $\mathrm{\Lambda}_{M}: \mathcal{H} \rightarrow \mathds{C}$. In order to deal with quantum channels, i.e., CPTP maps, it is convenient to use the Choi-Jamio\l kowski isomorphism \cite{JAMIOLKOWSKI1972275,CHOI1975285} :

\begin{lem}[Choi-Jamio\l kowski]
For any $\mathrm{\Psi} \in T( A , B )$ , there is a linear bijection between $T( A , B )$ and $L (A \otimes B) := L (\mathcal{H}^{A} \otimes \mathcal{H}^{B})$, which is given by
\begin{align}
\theta : T( A , B ) &\rightarrow L (A \otimes B) \notag \\
\mathrm{\Psi} & \mapsto
J_{\mathrm{\Psi}}^{A B} 
\end{align}
where $J_{\mathrm{\Psi}}^{A B} = \mathds{1}^{\tilde{A} \rightarrow A} \otimes \mathds{1}^{B} (J_{\mathrm{\Psi}}^{\tilde{A} B}) = \mathds{1}^{\tilde{A} \rightarrow A} \otimes \mathrm{\Psi} ( \phi_{+}^{\tilde{A} A} )$ with $\phi_{+}^{\tilde{A} A} := | \phi_{+}^{\tilde{A} A} \rangle \langle \phi_{+}^{\tilde{A} A} |$ being an unnormalized maximally entangled state with $| \phi_{+}^{\tilde{A} A} \rangle := \sum_{i=1}^{d_{A}} | i \rangle ^{\tilde{A}} | i \rangle ^{A}$. 
\end{lem}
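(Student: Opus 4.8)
\textit{Proof strategy.} The plan is to produce an explicit linear map in the reverse direction, show it inverts $\theta$ on one side by a direct computation, and then promote this to a genuine bijection by a dimension count. First I would dispense with linearity of $\theta$: because $\phi_{+}^{\tilde{A}A}$ is a fixed operator and the assignment $\mathrm{\Psi}\mapsto\mathds{1}^{\tilde{A}\rightarrow A}\otimes\mathrm{\Psi}$ is linear in $\mathrm{\Psi}$, we immediately get $\theta(a\mathrm{\Psi}_{1}+b\mathrm{\Psi}_{2})=a\,\theta(\mathrm{\Psi}_{1})+b\,\theta(\mathrm{\Psi}_{2})$, so $\theta$ is a linear map between the two vector spaces in question.

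Next I would write down the candidate inverse. For an operator $J^{AB}\in L(A\otimes B)$, define a superoperator $\Xi(J)\colon A\rightarrow B$ by
\begin{align}
\Xi(J)(X):=\Tr_{A}\!\left[\left(X^{T}\otimes\mathds{1}^{B}\right)J^{AB}\right],
\end{align}
where the transpose $X^{T}$ is taken in the computational basis $\{\ket{i}\}_{i=1}^{d_{A}}$ that was used to build $\ket{\phi_{+}^{\tilde{A}A}}$. This $\Xi$ is manifestly linear in $J$ and sends $L(A\otimes B)$ into $T(A,B)$, so it is a legitimate candidate for $\theta^{-1}$.

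The heart of the argument is to check $\Xi\circ\theta=\mathrm{id}_{T(A,B)}$. Expanding the maximally entangled state as $\phi_{+}^{\tilde{A}A}=\sum_{i,j}(\ket{i}\bra{j})^{\tilde{A}}\otimes(\ket{i}\bra{j})^{A}$ and applying $\mathds{1}^{\tilde{A}\rightarrow A}\otimes\mathrm{\Psi}$ gives $J_{\mathrm{\Psi}}^{AB}=\sum_{i,j}(\ket{i}\bra{j})^{A}\otimes\mathrm{\Psi}(\ket{i}\bra{j})^{B}$. Substituting this into the definition of $\Xi$ and using $\Tr[X^{T}\ket{i}\bra{j}]=\bra{j}X^{T}\ket{i}=X_{ij}$, one obtains $\Xi(J_{\mathrm{\Psi}})(X)=\sum_{i,j}X_{ij}\,\mathrm{\Psi}(\ket{i}\bra{j})=\mathrm{\Psi}\!\left(\sum_{i,j}X_{ij}\ket{i}\bra{j}\right)=\mathrm{\Psi}(X)$ for every $X\in L(\mathcal{H}^{A})$, hence $\Xi\circ\theta$ is the identity.

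Finally, since $\dim T(A,B)=d_{A}^{2}d_{B}^{2}=\dim L(A\otimes B)$, a linear map between finite-dimensional vector spaces of equal dimension that has a one-sided inverse is automatically bijective, and $\Xi$ is then its honest two-sided inverse; alternatively one verifies $\theta\circ\Xi=\mathrm{id}$ directly in the same style. I do not expect a genuine obstacle here --- the only thing requiring care is the bookkeeping: keeping the relabeling $\tilde{A}\rightarrow A$ straight, ensuring the partial trace and the transpose both act on the $A$ factor in the very basis that defines $\ket{\phi_{+}}$, and not interchanging $X$ with $X^{T}$ at the end.
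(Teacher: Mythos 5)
Your argument is correct: linearity of $\theta$ is immediate, the map $\Xi(J)(X)=\Tr_{A}[(X^{T}\otimes\mathds{1}^{B})J^{AB}]$ does satisfy $\Xi\circ\theta=\mathrm{id}$ by exactly the computation you give, and the dimension count $\dim T(A,B)=d_{A}^{2}d_{B}^{2}=\dim L(A\otimes B)$ upgrades the one-sided inverse to a bijection. Note that the paper does not prove this lemma at all --- it is quoted as a known result with citations to Choi and Jamio\l kowski --- so there is nothing to compare against; what you have written is the standard textbook proof of the isomorphism and is fully adequate.
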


\noindent Here the tilde symbol indicates an identical copy of the system under it, and hence in what follows, we do not distinguish between the space $\mathcal{H}^{\tilde{A}}$ and $\mathcal{H}^{A}$. Analogous to the role of density operator of a quantum system, the Choi-Jamio\l kowski matrix $J_{\mathrm{\Psi}}^{A B}$, or briefly CJ matrix, provides a complete description of the physical process $\mathrm{\Psi}$ at any given time. 

The entire study of quantum information theory revolves around how much information can be efficiently packed, transferred and retrieved in a desired fashion by means of preparation, manipulation and measurement of quantum states \cite{nielsen_chuang_2010,wilde_2017,watrous_2018}. In the theory of quantum mechanics modelled solely by quantum channels, this whole picture, therefore, boils down to the idea of storing and retrieving information, not from quantum states, but from quantum channels themselves. However, just like quantum states, the only way of accessing information from a quantum channel is by measuring it. Such a measurement is called process-channel measurement, first introduced in \cite{PhysRevA.77.062112}. Imagine a scenario with a state-preparation device providing initial state $\rho^{R A} \in L (R \otimes A)$ and a POVM $M = \{ M_{x} \}_{x}$ acting on $L (R \otimes B)$. Formally, the process-channel measurement is defined by the couple $\mathcal{T}:=( \rho^{R A}, M )$. To retrieve the information conveyed by the quantum channel $\mathrm{\Psi}$, a reference system $R$ distributes to the measuring device directly, meanwhile the probe system $A$, which is correlated with $R$, is transformed through $\mathrm{\Psi}$ followed by the measurement $M$. Here the classical information or measurement outcome $x$ will occur with probability $p_{x} = \Tr [ M_{x}  \mathds{1}^{R} \otimes \mathrm{\Psi} ( \rho^{R A} ) ]$. Next we substitute the equation $\rho^{R A} = \mathrm{\Upsilon}_{\rho} \otimes \mathds{1}^{A} ( \phi_{+}^{\tilde{A} A} )$, where $\mathrm{\Upsilon}_{\rho} : L ( \mathcal{H}^{\tilde{A}} ) \rightarrow L ( \mathcal{H}^{R} )$ is a CP linear map \cite{wilde_2017,watrous_2018}, into the expression for $p_{x}$ we get 
\begin{align}
p_{x} &= \Tr \left[ M_{x}  \mathds{1}^{R} \otimes \mathrm{\Psi} \left( \rho^{R A} \right) \right]\notag\\
&= \Tr \left[ M_{x}  \mathds{1}^{R} \otimes \mathrm{\Psi} 
\left( \mathrm{\Upsilon}_{\rho} \otimes \mathds{1}^{A} \left( \phi_{+}^{\tilde{A} A} \right)  \right) \right]\notag\\
&= \Tr \left[ \mathrm{\Upsilon}_{\rho}^{\ast} \otimes \mathds{1}^{B} \left( M_{x} \right)  \mathds{1}^{A} \otimes \mathrm{\Psi} 
\left( \phi_{+}^{\tilde{A} A} \right) \right]\notag\\
&= \Tr \left[ \mathrm{\Upsilon}_{\rho}^{\ast} \otimes \mathds{1}^{B} \left( M_{x} \right)  J_{\mathrm{\Psi}}^{A B} \right]
\end{align}
where in the third equation, $\mathrm{\Upsilon}_{\rho}^{\ast}$ is the dual map of $\mathrm{\Upsilon}_{\rho}$, which is also CP linear map, with the property that for all operators $M^{A} \in L(A) := L ( \mathcal{H}^{A} )$ and for all $M^{R} \in L(R) := L ( \mathcal{H}^{R} )$, we have $\Tr [ ( \mathrm{\Upsilon}_{\rho} ( M^{A} )  )^{\dagger}  M^{R} ] = \Tr [ ( M^{A} )^{\dagger}  \mathrm{\Upsilon}_{\rho}^{\ast} ( M^{R} ) ]$. 

In regard to above discussions, it is clear that for each single channel measurement $( \rho^{R A}, M_{x} )$, we can define an operator $E_{x} := \mathrm{\Upsilon}_{\rho}^{\ast} \otimes \mathds{1}^{A} ( M_{x} ) \geqslant 0$ satisfying 
\begin{align}\label{pro}
p_{x} = \Tr \left[ E_{x}  J_{\mathrm{\Psi}}^{A B} \right]
\end{align}
Here $E_{x}$ is the so-called process-channel effect of single channel measurement $( \rho^{R A}, M_{x} )$, and their collection $\{ E_{x} \}_{x}$ is known as process POVM (PPVOM) or tester \cite{PhysRevA.77.062112}. More generally, a PPOVM is a special case of $2$-comb \cite{Chiribella_2008,PhysRevLett.101.060401,8678741}, with pre-processing and post-processing are classical-to-quantum and quantum-to-classical channels respectively.


\textit{Maassen-Uffink Uncertainty Relations.-} 
Having defined what a measurement of quantum process is, we now use it to study entropic uncertainty relations. Let $\mathrm{\Psi}$ be a quantum channel from operator space $L(A)$ to $L(B)$. For simplicity of the exposition, we start with two PPOVMs, and denote them as $\mathcal{T}_{1}:=( \rho^{R A}, M )$ and $\mathcal{T}_{2}:=( \sigma^{R A}, N )$. We also denote by $\{p_{x}\}_x$ and $\{q_{y}\}_y$ the two probability distributions obtained by measuring $\mathrm{\Psi}$ with respect to $\mathcal{T}_{1}$ and $\mathcal{T}_{2}$.
In analogy with $E_{x}$ and $p_{x}$, let us also define $F_{y} = \mathrm{\Upsilon}_{\sigma}^{\ast} \otimes \mathds{1}^{A} ( N_{y} ) \geqslant 0$ as the process-channel effect of single channel measurement $( \sigma^{R A}, N_{y} )$, such that, $q_{y} = \Tr [ F_{y}  J_{\mathrm{\Psi}}^{A B} ]$. It is straight forward to check that $\sum_{x} E_{x} = (\rho^{A})^{\mathrm{T}} \otimes \mathds{1}^{B} \leqslant  \mathds{1}^{A B}$ and  $\sum_{y} F_{y} = (\sigma^{A})^{\mathrm{T}} \otimes \mathds{1}^{B} \leqslant  \mathds{1}^{A B}$, where $^{\mathrm{T}}$ denotes transposition in the corresponding space, and hence the mathematical structure of PPOVMs do not obey the completeness relation \cite{PhysRevA.77.062112}. 

Next we will introduce the overlap for PPOVMs by extending the sets of process-channel effect $\{ E_{x} \}_{x=1}^{m}$ and $\{ F_{y} \}_{y=1}^{n}$ to $\{ \tilde{E}_{x} \}_{x=1}^{m+1}$ and $\{ \tilde{F}_{y} \}_{y=1}^{n+1}$, respectively. In regards to the subscript, the extended process-channel effects $\tilde{E}_{x}$, $\tilde{F}_{y}$ are defined as :
\begin{equation}
\tilde{E}_{x} := \left\{
\begin{aligned}
&E_{x}  ~ & 1 \leqslant x \leqslant m, \\
\mathds{1}^{A B} - &(\rho^{A})^{\mathrm{T}} \otimes \mathds{1}^{B}  ~ & x=m+1.
\end{aligned}
\right.
\end{equation}
and
\begin{equation}
\tilde{F}_{y} := \left\{
\begin{aligned}
&F_{y}  ~ & 1 \leqslant y \leqslant n, \\
\mathds{1}^{A B} - &(\sigma^{A})^{\mathrm{T}} \otimes \mathds{1}^{B}  ~ & y=n+1.
\end{aligned}
\right.
\end{equation}
The quantity $c_{xy} ( \mathcal{T}_{1} , \mathcal{T}_{2} ) := \Vert \tilde{E}_{x}^{1/2}  \tilde{F}_{y}^{1/2} \Vert$ with $1 \leqslant x \leqslant m+1$ and $1 \leqslant y \leqslant n+1$, represents the overlap between process-channel measurements $\mathcal{T}_{1}$ and $\mathcal{T}_{2}$, analogous to the overlap between projective measurements \cite{PhysRevLett.50.631}, extensively investigated in many quantum information-theory contexts, for example, \cite{PhysRevLett.50.631}. The maximum overlap between  $\mathcal{T}_{1}$ and $\mathcal{T}_{2}$, then can be defined as $c ( \mathcal{T}_{1} , \mathcal{T}_{2} ) := \max_{x, y} c_{xy} ( \mathcal{T}_{1} , \mathcal{T}_{2} )$. Guided by intuition, $c$ should provide a bound on the minimum uncertainty arising from simultaneously measuring $\mathrm{\Psi}$ with $\mathcal{T}_{1}$ and $\mathcal{T}_{2}$, thereby quantifying the inherent incompatibility between process-channel measurements. We will return to the meaning of this later.

\begin{figure}[h]
\centering
\begin{tikzpicture}

    \node[] at (-2.91,-0.4) {(a) Process POVM (PPOVM) $\mathcal{T}_{1}$};
    \draw[very thick,dashed,fill=magenta,opacity=0.2] (-5,3) rectangle (-0.75,0); 
    \node[] at (-4.7,2.7) {$\mathcal{T}_{1}$};
    \node[] at (-4.6,1.475) {$\rho^{RA}$};
    \draw[very thick,->] (-4.3,1.475) -- (-3.8,2.15) -- (-2,2.15);
    \draw[very thick,->] (-4.3,1.475) -- (-3.8,0.8) -- (-3.2,0.8);
    \node[] at (-2.9,2.4) {$\mathcal H_{R}$};
    \node[] at (-3.5,0.55) {$\mathcal H_{A}$};
    \draw[very thick,->] (-2.6,0.8) -- (-2,0.8);
    \node[] at (-2.3,0.55) {$\mathcal H_{B}$};
    \draw[very thick,dashed,fill={rgb:black,1;white,2}] (-3.2,1.2) rectangle (-2.6,0.5);
    \node[] at (-2.9,0.85) {\large $\mathrm{\Psi}$};
    
    \draw[very thick] (-2,2.45) rectangle (-1.2,0.5);
    \draw[very thick] (-1.3,1.3) arc (10:170:0.3);
    \draw[very thick,->] (-1.6,1.3) -- (-1.4,1.75);
    \node[] at (-1.6,2) {$M$};
    \draw[very thick,-] (-1.2,1.525) -- (-0.9,1.525) -- (-0.9,1.625) -- (-0.8,1.475) 
    -- (-0.9,1.325) -- (-0.9,1.425) -- (-1.2,1.425);
    \node[] at (-1.05,1.7) {$\p$};
    
    \node[] at (1.59,-0.4) {(b) Process POVM (PPOVM) $\mathcal{T}_{2}$};
    \draw[very thick,dashed,fill=cyan,opacity=0.2] (-0.5,3) rectangle (3.75,0); 
    \node[] at (-0.2,2.7) {$\mathcal{T}_{2}$};
    \node[] at (-0.1,1.475) {$\sigma^{RA}$};
    \draw[very thick,->] (0.2,1.475) -- (0.7,2.15) -- (2.5,2.15);
    \draw[very thick,->] (0.2,1.475) -- (0.7,0.8) -- (1.3,0.8);
    \node[] at (1.6,2.4) {$\mathcal H_{R}$};
    \node[] at (1,0.55) {$\mathcal H_{A}$};
    \draw[very thick,->] (1.9,0.8) -- (2.5,0.8);
    \node[] at (2.2,0.55) {$\mathcal H_{B}$};
    \draw[very thick,dashed,fill={rgb:black,1;white,2}] (1.3,1.2) rectangle (1.9,0.5);
    \node[] at (1.6,0.85) {\large $\mathrm{\Psi}$};
    
    \draw[very thick] (2.5,2.45) rectangle (3.3,0.5);
    \draw[very thick] (3.2,1.3) arc (10:170:0.3);
    \draw[very thick,->] (2.9,1.3) -- (3.1,1.75);
    \node[] at (2.9,2) {$N$};
    \draw[very thick,-] (3.3,1.525) -- (3.6,1.525) -- (3.6,1.625) -- (3.7,1.475) 
    -- (3.6,1.325) -- (3.6,1.425) -- (3.3,1.425);
    \node[] at (3.45,1.7) {$\q$};

\end{tikzpicture}
\caption{(color online) Schematic illustration of the PPOVMs.}
\label{ppovm}
\end{figure}
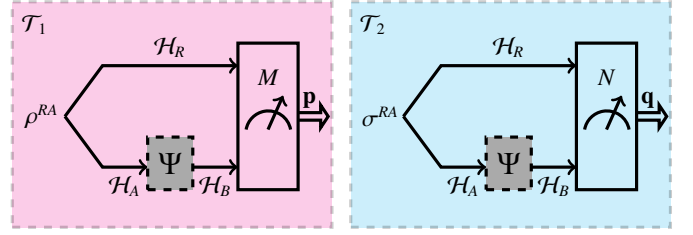

To establish our uncertainty relations, we collect the probabilities $p_{x}$ and $q_{y}$ into two probability vectors $\p := \p ( \mathcal{T}_{1} , \mathrm{\Psi} ) = ( p_1, \ldots, p_m )$ and $\q := \q ( \mathcal{T}_{2} , \mathrm{\Psi} ) = ( q_1, \ldots, q_n )$ respectively (Fig. \ref{ppovm}). Having defined what the
probability vectors for process-channel measurements are, we now consider the corresponding uncertainty measure.

In classical information theory, entropy describes the uncertainty associated with a random variable, and hence becomes a suitable candidate for uncertainty measure \cite{PhysRevLett.50.631}. Inspired  by Maassen and Uffink \cite{PhysRevLett.60.1103}, we base our first result on the class of R\'enyi entropies defined as :
\begin{align}
\mathrm{H}_{\alpha} \left( \p \right) := - \frac{1}{1-\alpha} \log \left(\sum\limits_{x=1}^{m} p_{x}^{\alpha}\right),
\end{align}
with $\alpha>0$ and $\alpha \neq 1$. 

As with the primal formulation of Maassen-Uffink uncertainty relations \cite{PhysRevLett.60.1103}, our first goal is to bound the joint uncertainty between $\p$ and $\q$ in terms of $\mathrm{H}_{\alpha} (\mathcal{T}_{1}) + \mathrm{H}_{\beta} (\mathcal{T}_{2})$. This will turn out to be different from ~\cite{Gao2018UncertaintyPF}, by a quantity which depends only on the process effects $E_{x}$ and $F_{y}$ but not on the process $\mathrm{\Psi}$ itself.

\begin{thm}\label{muthm}
For probability vectors $\p$ and $\q$ obtained by measuring $\mathrm{\Psi}$ with respect to $\mathcal{T}_{1}:=( \rho^{R A}, M )$ and $\mathcal{T}_{2}:=( \sigma^{R A}, N )$, their joint uncertainties in terms of $\mathrm{H}_{\alpha} (\mathcal{T}_{1}) + \mathrm{H}_{\beta} (\mathcal{T}_{2})$ is bounded by the maximum overlap $c ( \mathcal{T}_{1} , \mathcal{T}_{2} )$ as :
\begin{align} 
\label{mustrong}
\mathrm{H}_{\alpha} ( \frac{1}{d_{A}} \p \oplus \frac{d_{A}-1}{d_{A}} )
+
\mathrm{H}_{\beta} ( \frac{1}{d_{A}} \q \oplus \frac{d_{A}-1}{d_{A}} ) \geqslant 
-2\log c ( \mathcal{T}_{1} , \mathcal{T}_{2} ),
\end{align}
where $\alpha$ and $\beta$ satisfy the harmonic condition $1/\alpha + 1/\beta = 2$. 
\end{thm}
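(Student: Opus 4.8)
The plan is to reduce the process uncertainty relation to the ordinary POVM version of Maassen--Uffink, applied to the \emph{Choi state} of $\mathrm{\Psi}$ measured with the extended process effects.

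First I would record two elementary facts. Since $\mathrm{\Psi}$ is trace-preserving, its Choi matrix obeys $\Tr_{B}[J_{\mathrm{\Psi}}^{AB}]=\mathds{1}^{A}$, hence $\Tr[J_{\mathrm{\Psi}}^{AB}]=d_{A}$, so $\omega:=J_{\mathrm{\Psi}}^{AB}/d_{A}$ is a bona fide density operator on $\mathcal{H}^{A}\otimes\mathcal{H}^{B}$. Second, the extended families $\{\tilde{E}_{x}\}_{x=1}^{m+1}$ and $\{\tilde{F}_{y}\}_{y=1}^{n+1}$ are honest POVMs on $L(A\otimes B)$: because $\rho^{A}$ is a state we have $0\leqslant\rho^{A}\leqslant\mathds{1}^{A}$ and transposition preserves the spectrum, so $\tilde{E}_{m+1}=\mathds{1}^{AB}-(\rho^{A})^{\mathrm{T}}\otimes\mathds{1}^{B}\geqslant 0$, while $\sum_{x=1}^{m+1}\tilde{E}_{x}=\mathds{1}^{AB}$ holds by construction since $\sum_{x=1}^{m}E_{x}=(\rho^{A})^{\mathrm{T}}\otimes\mathds{1}^{B}$; the same applies to $\{\tilde{F}_{y}\}$.

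The key step is to identify the distributions produced by measuring the state $\omega$ with these two POVMs. For $1\leqslant x\leqslant m$ one gets $\Tr[\tilde{E}_{x}\,\omega]=p_{x}/d_{A}$ directly from (\ref{pro}); for the appended outcome, $\Tr[\tilde{E}_{m+1}\,\omega]=\bigl(d_{A}-\Tr[((\rho^{A})^{\mathrm{T}}\otimes\mathds{1}^{B})\,J_{\mathrm{\Psi}}^{AB}]\bigr)/d_{A}$, and using $\Tr_{B}[J_{\mathrm{\Psi}}^{AB}]=\mathds{1}^{A}$ the inner trace collapses to $\Tr[(\rho^{A})^{\mathrm{T}}]=\Tr[\rho^{A}]=1$, so the appended probability equals $(d_{A}-1)/d_{A}$. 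Hence measuring $\omega$ with $\{\tilde{E}_{x}\}$ yields precisely $\frac{1}{d_{A}}\p\oplus\frac{d_{A}-1}{d_{A}}$ (which in passing also confirms $\sum_{x}p_{x}=1$, so the left-hand side of (\ref{mustrong}) is the R\'enyi entropy of a genuine probability vector), and measuring $\omega$ with $\{\tilde{F}_{y}\}$ yields $\frac{1}{d_{A}}\q\oplus\frac{d_{A}-1}{d_{A}}$.

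Finally I would invoke the POVM form of (\ref{mu}): for any state $\omega$ and POVMs $\{P_{a}\}$, $\{Q_{b}\}$ with outcome distributions $\mathbf{r},\mathbf{s}$ and $1/\alpha+1/\beta=2$, one has $\mathrm{H}_{\alpha}(\mathbf{r})+\mathrm{H}_{\beta}(\mathbf{s})\geqslant-2\log\max_{a,b}\Vert P_{a}^{1/2}Q_{b}^{1/2}\Vert$. Taking $\omega=J_{\mathrm{\Psi}}^{AB}/d_{A}$, $\{P_{a}\}=\{\tilde{E}_{x}\}$ and $\{Q_{b}\}=\{\tilde{F}_{y}\}$, the left-hand side is exactly the left-hand side of (\ref{mustrong}) by the previous step, and the right-hand side is $-2\log c(\mathcal{T}_{1},\mathcal{T}_{2})$ by the definition of $c_{xy}(\mathcal{T}_{1},\mathcal{T}_{2})$ and $c(\mathcal{T}_{1},\mathcal{T}_{2})$, which establishes the theorem. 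The one point that needs care is that I am using the extension of (\ref{mu}) from projective measurements to arbitrary POVMs with overlap $\max_{x,y}\Vert\tilde{E}_{x}^{1/2}\tilde{F}_{y}^{1/2}\Vert$; this follows either from the Riesz--Thorin interpolation argument underlying \cite{PhysRevLett.60.1103} carried out at the operator level, or by Naimark-dilating each PPOVM separately, noting that the overlap is computed intrinsically and is unchanged by such a dilation. Specializing to $d_{A}=1$ recovers (\ref{mu}) for quantum states, since appending a zero component leaves the R\'enyi entropy invariant.
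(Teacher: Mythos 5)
Your proposal is correct and follows essentially the same route as the paper's own proof: normalize the Choi matrix to the state $J_{\mathrm{\Psi}}^{AB}/d_{A}$, observe that the extended effects $\{\tilde{E}_{x}\}$ and $\{\tilde{F}_{y}\}$ form genuine POVMs producing the distributions $\frac{1}{d_{A}}\p\oplus\frac{d_{A}-1}{d_{A}}$ and $\frac{1}{d_{A}}\q\oplus\frac{d_{A}-1}{d_{A}}$, and then invoke Rastegin's POVM version of the Maassen--Uffink relation with overlap $\max_{x,y}\Vert\tilde{E}_{x}^{1/2}\tilde{F}_{y}^{1/2}\Vert=c(\mathcal{T}_{1},\mathcal{T}_{2})$. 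The only difference is that you spell out the verification that the appended outcome carries probability $(d_{A}-1)/d_{A}$ and that the extended families satisfy the completeness relation, details the paper leaves implicit.
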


\noindent The left hand side of Eq. (\ref{mustrong}) relies on the initial state of the quantum process, $\mathrm{\Psi}$, and the incompatible process-channel measurements $\mathcal{T}_{1}$, $\mathcal{T}_{2}$. The right-hand side is an irreducible bound for its joint uncertainty, which depends only on the incompatible process-channel measurements $\mathcal{T}_{1}$, $\mathcal{T}_{2}$, and can be calculated explicitly. With the help of Eq. (\ref{mustrong}), we can also derive the Shannon entropic uncertainty relation for quantum processes by taking limits of $\alpha$ and $\beta$ to approach 1. 

It is interesting to remark that for a state-preparation channel $\mathrm{\Gamma}_{\rho}$, we have $d_{A} = \dim \mathds{C} = 1$ for any state $\rho$, and hence, $\frac{1}{d_{A}} \p \oplus \frac{d_{A}-1}{d_{A}} = \p$, $\frac{1}{d_{A}} \q \oplus \frac{d_{A}-1}{d_{A}} = \q$. Additionally, the PPOVMs will degenerate into POVMs, and the maximum overlap $c ( \mathcal{T}_{1} , \mathcal{T}_{2} )$ reduces to $c \left( M , N \right)$ \cite{Rastegin_2010}. This shows why our theorem \ref{muthm} includes Maassen-Uffink uncertainty relation as a special case. Moreover, note that the bound described by Eq (\ref{mustrong}) is tight, since for the case with $\alpha = \infty$ or $\beta = \infty$, $-2\log c ( \mathcal{T}_{1} , \mathcal{T}_{2} )$ is achieved by some quantum process \cite{xiao2019complementary}.


\textit{Universal Uncertainty Relations.-} 
We now turn our attention to the universal uncertainty relations for quantum processes. Traditionally, entropies like $\mathrm{H}_{\alpha}$ have been employed to study the uncertainty of probability distribution associated with measurements. However, in \cite{PhysRevLett.111.230401}, the authors showed that the notion of  majorization can fully characterize the uncertainty related with probability distributions and therefore capture the ``the essence of uncertainty in quantum mechanics''. Another motivation for the considerations of majorization uncertainty relations is that majorization, as a preorder, is more informative than the ones based on particular uncertainty measures, such as Shannon or R\'enyi entropies. Here, we will see that the joint distributions $\p \otimes \q$ and $\p \oplus \q$ obtained by measuring quantum processes are bounded by vectors independent of $\mathrm{\Psi}$.

Let us first collect all process effects from $\mathcal{T}_{1}$, $\mathcal{T}_{2}$ together, and define their collections as
\begin{equation}
G_{z} := \left\{
\begin{aligned}
&E_{z}  \quad &~ 1 \leqslant &z \leqslant m, \\
&F_{z-m}  \quad &~ m+1 \leqslant &z \leqslant m+n.
\end{aligned}
\right.
\end{equation}
It follows that the general experiments measuring the quantum process $\mathrm{\Psi}$ with $\mathcal{T}_{1}$ and $\mathcal{T}_{2}$ are completely characterized by the set of process effects $G$. For a subset $\mathcal{I}_{k} \subset \left\{1, \ldots, m+n \right\}$ with cardinality $k$, we define $G(\mathcal{I}_{k}) := \sum_{z \in \mathcal{I}_{k}} G_{z}$. With these conventions, the second goal of our work is to bound the joint uncertainty in the form of $\p \oplus \q$. More precisely,

\begin{thm}\label{uuroplusthm}
For probability vectors $\p$ and $\q$ obtained by measuring $\mathrm{\Psi}$ with respect to $\mathcal{T}_{1}:=( \rho^{R A}, M )$ and $\mathcal{T}_{2}:=( \sigma^{R A}, N )$, their joint uncertainties in terms of $\p \oplus \q$ is bounded by a vector independent of quantum process $\mathrm{\Psi}$ of the form
\begin{align}\label{uuroplus}
\p\oplus\q &\prec  
\sss 
:=
\left( s_{1} , s_{2} - s_{1} , s_{3} - s_{2} , \ldots, 0 \right),
\end{align}
where each $s_{k}$ is a functional of the conditional min-entropy
\begin{align}\label{sk}
s_{k} := \max\limits_{\mathcal{I}_{k}} 2^{ -\mathrm{H}_{\min} (B|A)_{ G\left( \mathcal{I}_{k} \right) } },
\end{align}
and the maximization is over all subsets $\mathcal{I}_{k}$. The conditional min-entropy for $G\left( \mathcal{I}_{k} \right)$ is defined as
\begin{align}
\mathrm{H}_{\min} (B|A)_{G\left(\mathcal{I}_{k}\right)} := - \log \inf\limits_{X^{A} \geqslant 0} \left\{ \Tr \left(X^{A} \right) | X^{A} \otimes \mathds{1}^B \geqslant G\left(\mathcal{I}_{k}\right) \right\}.
\end{align}
\end{thm}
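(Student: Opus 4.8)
The plan is to reduce the majorization relation \eqref{uuroplus} to a family of partial-sum inequalities, one for each cardinality $k$, each of which follows from a single semidefinite estimate together with the defining properties of the Choi--Jamio\l kowski matrix. First I would write $\p\oplus\q = (r_{1},\dots,r_{m+n})$ and observe, using \eqref{pro} and the definition of $G_{z}$, that $r_{z} = \Tr[G_{z}\, J_{\mathrm{\Psi}}^{AB}]$ for every $z$; hence for any $\mathcal{I}_{k}\subset\{1,\dots,m+n\}$ with $|\mathcal{I}_{k}| = k$ one has
\[
\sum_{z\in\mathcal{I}_{k}} r_{z} = \Tr\!\left[ G(\mathcal{I}_{k})\, J_{\mathrm{\Psi}}^{AB} \right] .
\]
So the whole problem becomes to bound this trace from above by a quantity that does not involve $\mathrm{\Psi}$.

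This is where the two structural properties of $J_{\mathrm{\Psi}}^{AB}$ enter, namely $J_{\mathrm{\Psi}}^{AB}\geqslant 0$ (complete positivity) and $\Tr_{B}[J_{\mathrm{\Psi}}^{AB}] = \mathds{1}^{A}$ (trace preservation). If $X^{A}\geqslant 0$ satisfies $X^{A}\otimes\mathds{1}^{B}\geqslant G(\mathcal{I}_{k})$, then pairing both sides with the positive operator $J_{\mathrm{\Psi}}^{AB}$ and using the trace-preserving condition gives
\[
\Tr\!\left[ G(\mathcal{I}_{k}) J_{\mathrm{\Psi}}^{AB} \right] \leqslant \Tr\!\left[ (X^{A}\otimes\mathds{1}^{B}) J_{\mathrm{\Psi}}^{AB} \right] = \Tr\!\left[ X^{A}\, \Tr_{B}(J_{\mathrm{\Psi}}^{AB}) \right] = \Tr[X^{A}] .
\]
Taking the infimum over all feasible $X^{A}$ yields $\sum_{z\in\mathcal{I}_{k}} r_{z}\leqslant 2^{-\mathrm{H}_{\min}(B|A)_{G(\mathcal{I}_{k})}}$, and maximizing over all $k$-element subsets gives $\sum_{z\in\mathcal{I}_{k}} r_{z}\leqslant s_{k}$ for \emph{every} such subset. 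I would also record two easy monotonicity facts for later use: enlarging $\mathcal{I}_{k}$ only increases $G(\mathcal{I}_{k})$, so $s_{k}$ is non-decreasing in $k$; and for $k = m+n$ one has $G(\mathcal{I}_{m+n}) = \bigl((\rho^{A})^{\mathrm{T}} + (\sigma^{A})^{\mathrm{T}}\bigr)\otimes\mathds{1}^{B}$, which forces $s_{m+n} = \Tr[\rho^{A}] + \Tr[\sigma^{A}] = 2$, equal to the total weight of $\p\oplus\q$.

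Finally I would convert these inequalities into the stated majorization. Fixing $k$ and choosing $\mathcal{I}_{k}$ to index the $k$ largest entries of $\p\oplus\q$ gives $\sum_{i=1}^{k}(\p\oplus\q)_{i}^{\downarrow}\leqslant s_{k}$. The vector $\sss$ is, up to trailing zeros, the one whose partial sums in the listed order reproduce $s_{1}, s_{2}, \dots$; its entries are nonnegative because the $s_{k}$ are non-decreasing, and its total is $s_{m+n} = 2$, matching that of $\p\oplus\q$. Since sorting a vector into non-increasing order can only increase its partial sums, $\sum_{i=1}^{k}\sss_{i}^{\downarrow}\geqslant s_{k}\geqslant\sum_{i=1}^{k}(\p\oplus\q)_{i}^{\downarrow}$ for all $k$, which with the equality of totals is exactly $\p\oplus\q\prec\sss$. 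The one genuinely nontrivial ingredient is recognizing that the conditional min-entropy of $G(\mathcal{I}_{k})$ is the right $\mathrm{\Psi}$-independent majorant; this is forced by the positivity-plus-trace-preservation step, which plays, for testers acting on a channel with a fixed input marginal, the role that the spectral-norm bound plays in the state case. The point to watch is the bookkeeping at the end: verifying that $\sss$ is an admissible majorization upper bound (nonnegative entries, same total) and that its non-increasing rearrangement has partial sums dominating the $s_{k}$.
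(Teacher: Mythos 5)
Your proposal is correct and follows essentially the same route as the paper: identify each partial sum of $\p\oplus\q$ with $\Tr[G(\mathcal{I}_{k})\,J_{\mathrm{\Psi}}^{AB}]$, bound it $\mathrm{\Psi}$-independently by $2^{-\mathrm{H}_{\min}(B|A)_{G(\mathcal{I}_{k})}}$ via the semidefinite constraint $X^{A}\otimes\mathds{1}^{B}\geqslant G(\mathcal{I}_{k})$ together with $J_{\mathrm{\Psi}}^{AB}\geqslant 0$ and $\Tr_{B}[J_{\mathrm{\Psi}}^{AB}]=\mathds{1}^{A}$, and then assemble these inequalities (with the equal totals) into the majorization $\p\oplus\q\prec\sss$. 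The only difference is cosmetic: you need only weak duality (feasibility of $X^{A}$), whereas the paper invokes strong duality of the SDP to get equality, which it uses for the tightness remark rather than for the bound itself; your explicit bookkeeping of the nonnegativity of the entries of $\sss$ and the total $s_{m+n}=2$ is a detail the paper leaves implicit.
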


\noindent Note that the operator $G\left(\mathcal{I}_{k}\right)$ is a process-channel effect, which is also a unnormalized quantum state. Thus, the conditional min-entropy defined above is not our usually used one for bipartite states. In order to find a formula based on our conversant conditional min-entropy, we can define a bipartite quantum state as $\tau \left(\mathcal{I}_{k}\right) = G\left(\mathcal{I}_{k}\right) / \Tr \left[ G\left(\mathcal{I}_{k}\right) \right]\in D \left( A \otimes B \right)$, which depends on the subset $\mathcal{I}_{k}$, and call it process-channel state corresponding to $G\left(\mathcal{I}_{k}\right)$. Consequently, $\mathrm{H}_{\min} (B|A)_{ \tau^{A B} \left( \mathcal{I}_{k} \right) }$ is conditional min-entropy of the bipartite state $\tau^{A B} \left( \mathcal{I}_{k} \right)$. Now the quantity $s_{k}$ can be expressed as:
\begin{align}
s_{k} = \max\limits_{\mathcal{I}_{k}} 2^{ \left( -\mathrm{H}_{\min} (B|A)_{ \tau \left( \mathcal{I}_{k} \right) } + 
\log \Tr \left[ G\left(\mathcal{I}_{k}\right) \right] \right)}.
\end{align}
We remark that the tightness of $s_{k}$ and the rigorous proof of  Thm. \ref{uuroplusthm} are detailed in the Supplemental Material. 

Aside from its numerous applications in single-shot quantum information, quantum hypothesis testing, and quantum resource theories, we show that this entropic quantifier has operational significance in terms of the tightness of the UURs for quantum processes with direct-sum form, which might also have an impact on the development of future technologies of quantum processes.

Continuing our discussion of UURs for quantum processes, we now show that the joint uncertainty based on direct product, i.e., $\p\otimes\q$, can also be similarly characterized as stated in Thm \ref{uurotimesthm}.

\begin{thm}\label{uurotimesthm}
For probability vectors $\p$ and $\q$ obtained by measuring $\mathrm{\Psi}$ with respect to $\mathcal{T}_{1}:=( \rho^{R A}, M )$ and $\mathcal{T}_{2}:=( \sigma^{R A}, N )$, their joint uncertainties in terms of $\p \otimes \q$ is therefore bounded by a vector independent of quantum process $\mathrm{\Psi}$ of the form
\begin{align}\label{uurotimes}
\p\otimes\q &\prec  
\ttt
:=
\left( t_{1}, t_{2}-t_{1}, t_{3}-t_{2}, \ldots, 0 \right),
\end{align}
with $t_{k}$ is defined by $ (s_{k+1}/2)^{2}$ constructed in Thm. \ref{uuroplusthm}.
\end{thm}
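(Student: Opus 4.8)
The plan is to obtain (\ref{uurotimes}) directly from the direct-sum relation (\ref{uuroplus}) of Theorem \ref{uuroplusthm} by a purely combinatorial majorization argument, exactly as in the state case: all the ``quantum'' input --- the process effects $G_z$, the process-channel conditional min-entropy, and the process-independence of the bound --- is already encoded in $\sss$, so beyond Theorem \ref{uuroplusthm} one needs only sorting and elementary inequalities. Without loss of generality assume $\p$ and $\q$ have non-increasing components $p_1\geqslant p_2\geqslant\cdots$ and $q_1\geqslant q_2\geqslant\cdots$, so that the entries of $\p\otimes\q$ are the products $p_iq_j$. Writing $W_k$ for the sum of the $k$ largest entries of $\p\otimes\q$, establishing (\ref{uurotimes}) amounts to showing $W_k\leqslant(s_{k+1}/2)^2=t_k$ for every $k$; the normalization demanded by the definition of majorization, namely $\|\p\otimes\q\|_1=1$ matching the last partial sum of $\ttt$, is immediate.

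First I would reduce $W_k$ to a sum over a ``staircase''. Among the size-$k$ index sets $S$ that maximize $\sum_{(i,j)\in S}p_iq_j$, an exchange argument --- if some $(i,j)\in S$ while a different index $(i',j')\leqslant(i,j)$ (componentwise) lies outside $S$, then replacing $(i,j)$ by $(i',j')$ does not decrease the sum, since $p_{i'}\geqslant p_i$ and $q_{j'}\geqslant q_j$, while the potential $\sum_{(i,j)\in S}(i+j)$ strictly decreases --- shows that $W_k$ is attained by a down-set $S$. If $a$ and $b$ are the largest row- and column-indices appearing in such an $S$, then $S$ contains the ``L-shape'' $\{(i,1):i\leqslant a\}\cup\{(1,j):j\leqslant b\}$ of size $a+b-1$, so $a+b\leqslant k+1$; and since $S\subseteq\{1,\ldots,a\}\times\{1,\ldots,b\}$ we get $W_k\leqslant P_aQ_b$, where $P_a:=\sum_{i\leqslant a}p_i$ and $Q_b:=\sum_{j\leqslant b}q_j$.

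Next I would feed this into Theorem \ref{uuroplusthm}. Since $P_aQ_b\leqslant\big((P_a+Q_b)/2\big)^2$, and $P_a+Q_b$ is the sum of $a$ components of $\p$ together with $b$ components of $\q$, hence the sum of $a+b$ components of the concatenation $\p\oplus\q$, it is at most the sum of the $a+b$ largest components of $\p\oplus\q$; by (\ref{uuroplus}) this last quantity is at most $s_{a+b}$. As $\sss$ is a probability vector its partial sums $s_\ell$ are non-decreasing in $\ell$, so $a+b\leqslant k+1$ yields $P_a+Q_b\leqslant s_{k+1}$ (with the harmless convention $s_\ell=\|\p\oplus\q\|_1=2$ once $\ell\geqslant m+n$, which still gives $W_k\leqslant1$ there). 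Combining, $W_k\leqslant(s_{k+1}/2)^2=t_k$, the $k$-th partial sum of $\ttt$, for every $k$ --- precisely $\p\otimes\q\prec\ttt$.

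I expect the main obstacle to be the reduction step, not anything analytic: one must justify carefully that the $k$ largest entries of a tensor product of ordered vectors can be realized by a down-set, and that a down-set of size $k$ is confined to indices with $a+b\leqslant k+1$. This is elementary but is exactly where the index shift $s_k\mapsto s_{k+1}$ --- and hence the choice $t_k=(s_{k+1}/2)^2$ --- originates, and where ordering or off-by-one slips are easy. A minor additional point is the boundary bookkeeping (the trailing zeros of $\ttt$ and the capping of partial sums at $1$), and, as with Theorem \ref{uuroplusthm}, the proof of tightness of $\ttt$ is deferred to the Supplemental Material.
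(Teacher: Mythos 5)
Your proposal is correct and takes essentially the same route as the paper: the paper's proof likewise bounds the sum of the $k$ largest entries of $\p\otimes\q$ by $\max_{|R|+|S|=k+1}\bigl(\tfrac{1}{2}(\sum_{x\in R}p_x+\sum_{y\in S}q_y)\bigr)^2$ (your down-set/L-shape argument is exactly the justification of this first inequality, which the paper leaves implicit) and then identifies that maximum with $(s_{k+1}/2)^2=t_k$ via the same SDP/min-entropy quantity used for Theorem \ref{uuroplusthm}, which you instead invoke through Theorem \ref{uuroplusthm} directly together with monotonicity of $s_\ell$. One cosmetic slip: $\sss$ sums to $2$ rather than being a probability vector, and the monotonicity $s_\ell\leqslant s_{\ell+1}$ follows from positivity of the effects $G_z$, not from normalization.
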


We finish by remarking that the class of Schur-concave functions can preserve the pre-order induced by majorization; that is for a Schur-concave function $\mathrm{\Phi} : \mathds{R}^{d} \rightarrow \mathds{R}$ and $x$, $y \in \mathds{R}^{d}$, $\mathrm{\Phi} (\x) \geqslant \mathrm{\Phi} (\y)$ whenever $\x \prec \y$. As a result, the UURs  for quantum processes in terms of $\p\oplus\q \prec \sss$ and $\p\otimes\q \prec \ttt$ generate an infinite family of uncertainty relations of the forms $\mathrm{\Phi}(\p\oplus\q) \geqslant \mathrm{\Phi}(\sss)$ and $\mathrm{\Phi}(\p\otimes\q) \geqslant \mathrm{\Phi}(\ttt)$ with each $\mathrm{\Phi}$. Taking $\mathrm{\Phi}$ as Shannon entropy $\mathrm{H}$, (\ref{uuroplus}), (\ref{uurotimes}) will lead to the Shannon entropic uncertainty relations for quantum processes $\mathrm{H} ( \mathcal{T}_{1} ) + \mathrm{H} ( \mathcal{T}_{2} ) \geqslant \mathrm{H} ( \sss )$ and $\mathrm{H} ( \mathcal{T}_{1} ) + \mathrm{H} ( \mathcal{T}_{2} ) \geqslant \mathrm{H} ( \ttt )$ with $\mathrm{H} ( \mathcal{T}_{1} ) := \mathrm{H} ( \p )$ and $\mathrm{H} ( \mathcal{T}_{2} ) := \mathrm{H} ( \q )$. However, the result presented in Thm. \ref{muthm} are not covered by UURs, since in (\ref{mustrong}) the uncertainty associated with $\p$ and $\q$ are quantified by different uncertainty measures.


\textit{Conclusions and Discussions.-} 
In this work we have addressed the question of whether quantum mechanics will obstruct us from predicting the outcomes of incompatible process-channel measurements to arbitrary precision. We studied uncertainty relations in three distinct forms: Maassen-Uffink form; direct-sum form; and direct-product form, which reduces to the well-known Maassen-Uffink entropic uncertainty relations \cite{PhysRevLett.60.1103} and UURs \cite{PhysRevLett.111.230401,Pucha_a_2013,PhysRevA.89.052115} as our special cases by choosing the process $\mathrm{\Psi}$ to be a state-preparation channel $\mathrm{\Gamma}_{\rho}$, i.e. $\mathrm{\Psi} = \mathrm{\Gamma}_{\rho}$. 

In particular, following Deutsch's observation \cite{PhysRevLett.50.631}, in order to express the uncertainty principle for quantum processes $\mathrm{\Psi} : A \rightarrow B$ quantitatively, we are seeking an inequality with the form
$
\mathcal{U} ( \mathcal{T}_{1}, \mathcal{T}_{2}, \mathrm{\Psi} )
\geqslant
\mathcal{B} ( \mathcal{T}_{1}, \mathcal{T}_{2} ),
$
where the quantity on the left-hand side represents the joint probability distribution induced by measuring quantum process $\mathrm{\Psi}$ with PPOVMs $\mathcal{T}_{1}$ and $\mathcal{T}_{1}$ in the form of $\mathcal{U}$, with the optimal bound $\mathcal{B} \left( \mathcal{T}_{1}, \mathcal{T}_{2} \right) := \min_{ \mathrm{\Psi} \in \text{CPTP}( A , B )} \mathcal{U} \left( \mathcal{T}_{1}, \mathcal{T}_{2}, \mathrm{\Psi} \right)$. If we denote the set of all state-preparation channels as $\mathrm{\Gamma} \subset \text{CPTP}( A , B )$, the celebrated Heisenberg's uncertainty principle, with the form
$
\mathcal{U} ( \mathcal{T}_{1}, \mathcal{T}_{2}, \mathrm{\Gamma}_{\rho} )
\geqslant
\min_{ \mathrm{\Psi} \in \mathrm{\Gamma}} \mathcal{U} 
( \mathcal{T}_{1}, \mathcal{T}_{2}, \mathrm{\Psi} ),
$
becomes a special case of our generalized uncertainty principle.

Our first main result shows that the potential knowledge one can have about any quantum process from pair of process-channel measurements, $\mathcal{T}_1$ and $\mathcal{T}_2$, quantified by the R\'enyi entropies with harmonic condition, is restricted by their inherent incompatibility in terms of $c ( \mathcal{T}_{1} , \mathcal{T}_{2} )$. Moreover, in our upcoming work of experimental investigations of uncertainty principle for quantum processes performed in a photonic system, we will show that (\ref{mustrong}) is tight.

Secondly, we derived the UURs for quantum processes, i.e. (\ref{uuroplus}) and (\ref{uurotimes}), which are the generalizations of the previous ones for quantum states, and are explicitly computable. A natural question is whether the process-independent bounds $\sss$ and $\ttt$ are optimal. For the sum of each $k$ distinct elements in $\p \oplus \q$, their upper-bound $s_{k}$ is tight, which means $s_{k}$ is achieved by performing $\mathcal{T}_{1}$ and $\mathcal{T}_{2}$ to some quantum processes. However, the vector $\sss$ consists of $s_{k}$ is not optimal. In the Supplemental Material \cite{sm}, we show that the optimal bound for $\p \oplus \q$ exists and is given by the vector $\mathcal{F} (\sss)$ with $\mathcal{F}$ stands for the flatness process \cite{992785}. On the other hand, even though the existence of optimal bound $\rrr$ for $\p \otimes \q$ is guaranteed by the completeness of majorization lattice \cite{BAPAT199159,BONDAR1994115,Bosyk_2019,doi:10.1002/andp.201900143}, so far we do not have any effective method in calculating it in general. Although the bound $\ttt$ introduced in (\ref{uurotimes}) is weaker when compares with $\rrr$, it is easy-to-evaluate. Similar to the method for direct-sum, the flatness process $\mathcal{F}$ can further improve the bound of direct-product to $\p \otimes \q \prec \rrr \prec \mathcal{F} (\ttt) \prec \ttt$. As a by-product of UURs for quantum processes, we show that the optimal bound for direct-sum form is specified completely by the conditional min-entropy, which connects UURs with single-shot information theory.

There are a plenty of important directions of investigations which we leave for future work. First of all, we did not explore here the extension of our results to the cases with bipartite quantum channels \cite{gour2019entanglement,buml2019resource}, where the measured quantum channel is prepared entangled with another channel, a dynamic quantum memory that might be possible to predict the outcomes for both process-channel measurements $\mathcal{T}_{1}$, $\mathcal{T}_{2}$ simultaneously, which is the generalized uncertainty principle in the presence of dynamic quantum memory \cite{berta2010heisenberg}. It would be also interesting to study how the use of dynamic quantum memory can further strengthen the power of quantum cryptography.

Another important direction of investigation is the noise and disturbance tradeoff in process-channel measurements \cite{PhysRevLett.112.050401}. To capture the idea of ``how accurate'' a process-channel measurement $\mathcal{T}_{1}$ is, we should consider its measuring apparatus $\mathscr{T}$, and the corresponding error $\mathscr{E} ( \mathcal{T}_{1} , \mathscr{T} )$, or noise, which is quantifies through a operational measurement statistics. When the measured channel is subjected to the apparatus $\mathscr{T}$, another process-channel measurement $\mathcal{T}_{2}$ will be disturbed and lead to the disturbance $\mathscr{D} ( \mathcal{T}_{2} , \mathscr{T} )$. The aim of this direction of investigation is to introduce the operational definitions for $\mathscr{E}$ and $\mathscr{D}$ such that $\mathscr{E} ( \mathcal{T}_{1} , \mathscr{T} ) +\mathscr{D} ( \mathcal{T}_{2} , \mathscr{T} ) \geqslant -2\log c ( \mathcal{T}_{1} , \mathcal{T}_{2} )$.

Finally, when considering the process-channel measurements with possibilities of small errors, we should employ smooth entropies to obtain meaningful results. Therefore, it would be important to generalize our entropic uncertainty relation for quantum processes to the one expressed in terms of smooth entropies \cite{PhysRevLett.106.110506}. Nevertheless, these generalizations are nontrivial and are left for future work.


\begin{acknowledgements}
We would like to thank Eric Chitambar, Kun Fang, Li Gao, Mile Gu, Anna Jen\v{c}ov\'a, Nicholas LaRacuente, Zhihao Ma, Varun Narasimhachar, Carlo Maria Scandolo, Gaurav Saxena, Jayne Thompson, Kunkun Wang, Peng Xue, Lei Xiao, and Yuxiang Yang for fruitful discussions. Y. X., and G. G. acknowledge financial support from the Natural Sciences and Engineering Research Council of Canada (NSERC).\end{acknowledgements}


\bibliographystyle{apsrev4-1}
\bibliography{YXiaoBib.bib}

\appendix
{
\begin{center}
{\bfseries Supplemental Material}
\end{center}
}
\setcounter{equation}{0}


\subsection{Proof of Theorem \ref{muthm}}

In this section we turn our attention to the Maassen-Uffink-form uncertainty relations for quantum processes. We will first briefly review the historical developments of Maassen-Uffink uncertainty relation, before formulating our generalized uncertainty principle in terms of R\'enyi entropies.

In 1983, Deutsch first introduced the uncertainty principle in terms of Shannon entropy for any two non-degenerate observables \cite{PhysRevLett.50.631}. The improved bound on Deutsch uncertainty relation was conjectured by Kraus in 1987 \cite{PhysRevD.35.3070}, and was proved by Maassen and Uffink one year later \cite{PhysRevLett.60.1103}. The uncertainty measure adopted by Maassen and Uffink is R\'enyi entropy, an improvement over Shannon entropic uncertainty relations. The original result of \cite{PhysRevLett.50.631} is only valid for pure states with Von Neumann measurements, and their proof relies on Riesz theorem \cite{hardy1952inequalities}. It is thus natural to ask whether Maassen-Uffink uncertainty relation also holds for mixed states with POVMs, which was shown to be correct by Rastegin in 2010 \cite{Rastegin_2010}. 

\begin{lem}[Rastegin]\label{mixedmulemma}
For probability vectors $\p$ and $\q$ obtained by measuring quantum state $\rho$ with respect to POVMs $M$ and $N$, their joint uncertainties in terms of $\mathrm{H}_{\alpha} (M) + \mathrm{H}_{\beta} (N)$ is therefore bounded by the maximum overlap $c ( M , N , \rho )$ of the form
\begin{align} 
\mathrm{H}_{\alpha} ( M )
+
\mathrm{H}_{\beta} ( N ) \geqslant 
-2\log c ( M , N , \rho ),
\end{align}
where $\alpha$ and $\beta$ satisfy the harmonic condition $1/\alpha + 1/\beta = 2$. Here the quantity $c ( M , N , \rho )$ is defined by
\begin{align}
c ( M , N , \rho ) 
:= \max_{ \rho = \sum_{k} u_{k} | u_{k} \rangle\langle u_{k} | } \max_{x,y}
\frac{ \Tr [ M_{x}^{\dagger} N_{y} | u_{k} \rangle\langle u_{k} |  ] }
{\| M_{x}^{1/2} | u_{k} \rangle  \| \cdot \| N_{y}^{1/2} | u_{k} \rangle \|}. 
\end{align}
\end{lem}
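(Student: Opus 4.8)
The plan is to reprove the Maassen--Uffink inequality directly from the measurement statistics, following the structure of the classical argument \cite{PhysRevLett.60.1103,hardy1952inequalities} but vectorising everything into Hilbert--Schmidt space so that arbitrary POVMs and mixed states are treated simultaneously; only at the very end does the overlap emerging from the estimate get compared with $c(M,N,\rho)$. Throughout one may discard outcomes with $p_x=0$ or $q_y=0$ (harmless for R\'enyi entropies of positive order) and assume, relabelling $M\leftrightarrow N$ if needed, that $\alpha\geqslant1\geqslant\beta$. \emph{Vectorisation.} Rewrite the probabilities as squared Hilbert--Schmidt norms, $p_x=\Tr[M_x\rho]=\|M_x^{1/2}\rho^{1/2}\|_2^2$ and $q_y=\|N_y^{1/2}\rho^{1/2}\|_2^2$; from $\sum_yN_y=\mathds{1}$ one gets $\rho^{1/2}=\sum_yN_y^{1/2}\big(N_y^{1/2}\rho^{1/2}\big)$, hence $M_x^{1/2}\rho^{1/2}=\sum_yM_x^{1/2}N_y^{1/2}\big(N_y^{1/2}\rho^{1/2}\big)$. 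With the unit Hilbert--Schmidt vectors $\widehat\Phi_x:=M_x^{1/2}\rho^{1/2}/\sqrt{p_x}$, $\widehat\Psi_y:=N_y^{1/2}\rho^{1/2}/\sqrt{q_y}$, taking the Hilbert--Schmidt inner product of the last identity with $\widehat\Phi_x$ gives
\begin{align}
\sqrt{p_x}=\sum_y\tau_{xy}\sqrt{q_y},\qquad
\tau_{xy}:=\big\langle\widehat\Phi_x,\,M_x^{1/2}N_y^{1/2}\widehat\Psi_y\big\rangle=\frac{\Tr[\rho\,M_xN_y]}{\sqrt{p_xq_y}},
\end{align}
so $\sqrt{\p}=\tau\sqrt{\q}$ as vectors, with $\|\sqrt{\p}\|_2=\|\sqrt{\q}\|_2=1$ by completeness of the POVMs.

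\emph{Contraction and interpolation.} I would then show the scalar matrix $\tau$ is an $\ell^2$-contraction. The block operator $\mathbf{T}\colon(X_y)_y\mapsto\big(\sum_yM_x^{1/2}N_y^{1/2}X_y\big)_x$ on direct sums of copies of Hilbert--Schmidt space factors as $\mathbf{T}=\mathbf{M}^{1/2}\big(\mathbf{N}^{1/2}\big)^{\dagger}$ with $\mathbf{M}^{1/2}\colon X\mapsto(M_x^{1/2}X)_x$ and $\mathbf{N}^{1/2}\colon X\mapsto(N_y^{1/2}X)_y$ isometries (precisely because $\sum_xM_x=\sum_yN_y=\mathds{1}$), so $\|\mathbf{T}\|\leqslant1$; since $\tau=\widehat\Phi^{\dagger}\mathbf{T}\widehat\Psi$ with $\widehat\Phi\colon e_x\mapsto\widehat\Phi_x$, $\widehat\Psi\colon e_y\mapsto\widehat\Psi_y$ again isometries, $\|\tau\|_{\ell^2\to\ell^2}\leqslant1$, while $\|\tau\|_{\ell^1\to\ell^\infty}=\max_{x,y}|\tau_{xy}|=:\bar c$. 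Riesz--Thorin interpolation between these endpoints yields $\|\tau v\|_{2\alpha}\leqslant\bar c^{\,1/\beta-1}\|v\|_{2\beta}$ whenever $1/(2\alpha)+1/(2\beta)=1$; specialising to $v=\sqrt{\q}$ and taking logarithms, the harmonic condition $1/\alpha+1/\beta=2$ rearranges this into $\mathrm{H}_\alpha(\p)+\mathrm{H}_\beta(\q)\geqslant-2\log\bar c$. The borderline orders $\alpha=\beta=1$ and $\{\alpha,\beta\}=\{\tfrac12,\infty\}$ follow by continuity.

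\emph{Identifying the overlap.} It remains to check $\bar c\leqslant c(M,N,\rho)$. Fix any pure-state decomposition $\rho=\sum_ku_k\ket{u_k}\!\bra{u_k}$. For each $x,y$, write $\Tr[\rho\,M_xN_y]=\sum_ku_k\big\langle M_x^{1/2}u_k\,\big|\,M_x^{1/2}N_y^{1/2}\,\big|\,N_y^{1/2}u_k\big\rangle$, bound the $k$-th term in modulus by $u_k\,c_{xy}^{(k)}\,\|M_x^{1/2}\ket{u_k}\|\,\|N_y^{1/2}\ket{u_k}\|$ with $c_{xy}^{(k)}:=\big|\bra{u_k}M_xN_y\ket{u_k}\big|/\big(\|M_x^{1/2}\ket{u_k}\|\,\|N_y^{1/2}\ket{u_k}\|\big)$, and apply Cauchy--Schwarz over $k$ using $\sum_ku_k\|M_x^{1/2}\ket{u_k}\|^2=p_x$, $\sum_ku_k\|N_y^{1/2}\ket{u_k}\|^2=q_y$; this gives $|\Tr[\rho\,M_xN_y]|\leqslant(\max_kc_{xy}^{(k)})\sqrt{p_xq_y}$, hence $\bar c\leqslant\max_{x,y,k}c_{xy}^{(k)}\leqslant c(M,N,\rho)$. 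Together with the previous step this proves the lemma.

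The main obstacle is the middle pair of steps: establishing that the scalar matrix $\tau$ distilled from the statistics is an $\ell^2$-contraction --- the POVM/mixed-state replacement for the unitarity of the overlap matrix in the original projective, pure-state proof, which rests on factoring $\mathbf{T}$ through two isometries --- and then carrying out the Riesz--Thorin interpolation with the correct endpoint norms and exponent so that, under the harmonic condition, the logarithm collapses exactly to $-2\log\bar c$. The only other subtlety is the final step, checking that the overlap $\bar c=\max_{x,y}|\Tr(\rho M_xN_y)|/\sqrt{p_xq_y}$ the argument actually produces is dominated by the ensemble-optimised constant $c(M,N,\rho)$ in the statement.
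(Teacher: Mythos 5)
Your argument is correct, but it is not the route the paper takes: the paper does not reprove this lemma at all — it imports it from Rastegin (2010) and only remarks that the proof there proceeds by Naimark-dilating the POVMs to projective measurements and then invoking the Riesz theorem for the resulting pure-state/projective Maassen--Uffink relation. You instead give a self-contained argument that never dilates: you vectorise into Hilbert--Schmidt space using $\rho^{1/2}$, obtain the scalar identity $\sqrt{\p}=\tau\sqrt{\q}$ with $\tau_{xy}=\Tr[\rho M_xN_y]/\sqrt{p_xq_y}$, prove $\|\tau\|_{\ell^2\to\ell^2}\leqslant 1$ by factoring through the isometries $X\mapsto(M_x^{1/2}X)_x$ and $X\mapsto(N_y^{1/2}X)_y$ (the correct POVM/mixed-state surrogate for unitarity of the overlap matrix), and then run Riesz--Thorin between the $(2,2)$ and $(1,\infty)$ endpoints; the exponent bookkeeping under $1/\alpha+1/\beta=2$ is right, and the endpoint orders are covered. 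The closing step is also sound: triangle inequality plus Cauchy--Schwarz over the ensemble gives $\bar c=\max_{x,y}|\Tr(\rho M_xN_y)|/\sqrt{p_xq_y}\leqslant\max_{x,y,k}c^{(k)}_{xy}$ for \emph{any} fixed decomposition, hence $\bar c\leqslant c(M,N,\rho)$ whether one reads the (typographically loose) definition in the statement as a max or a min over decompositions — note that the displayed definition leaves the index $k$ unbound and omits the modulus on the numerator, and your reading with $\max_k$ and $|\cdot|$ is the intended one. What the two routes buy: the dilation argument leans on the already-known projective result and is short once Naimark's theorem is granted, whereas your proof is elementary and monolithic, treats mixed states and arbitrary POVMs in one stroke, and in fact delivers the sharper intermediate, state-dependent constant $\bar c$ (which can beat $c(M,N,\rho)$, and certainly beats $c(M,N)$) before relaxing to the constant in the statement — a small strengthening the paper's citation-based treatment does not expose.
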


\noindent The method of proof employed Naimark's dilation theorem \cite{paulsen_2003} and Riesz theorem as expected. By using the properties of operator norm, that is $\| \bigcdot \| := \max \{ \| \bigcdot u \| \, | \, \| u \| = 1 \}$, lemma \ref{mixedmulemma} leads to the following entropic uncertainty relations with a state-independent bound $c ( M , N ) := \max_{x,y} \| M_{x}^{1/2} N_{y}^{1/2}\|$

\begin{cor}[Rastegin]\label{mixedmucorollary}
For probability vectors $\p$ and $\q$ obtained by measuring quantum state $\rho$ with respect to POVMs $M$ and $N$, their joint uncertainties in terms of $\mathrm{H}_{\alpha} (M) + \mathrm{H}_{\beta} (N)$ is therefore bounded by the maximum overlap $c ( M , N )$ of the form
\begin{align} 
\mathrm{H}_{\alpha} ( M )
+
\mathrm{H}_{\beta} ( N ) \geqslant 
-2\log c ( M , N ),
\end{align}
where $\alpha$ and $\beta$ satisfy the harmonic condition $1/\alpha + 1/\beta = 2$.
\end{cor}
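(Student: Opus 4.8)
The plan is to obtain Corollary~\ref{mixedmucorollary} as an immediate consequence of Lemma~\ref{mixedmulemma}, by showing that the state-dependent overlap $c(M,N,\rho)$ is bounded from above by the state-independent quantity $c(M,N) := \max_{x,y}\|M_x^{1/2}N_y^{1/2}\|$ for every state $\rho$, and then using the fact that $t \mapsto -2\log t$ is monotonically decreasing. The R\'enyi parameters $\alpha,\beta$ with $1/\alpha + 1/\beta = 2$ simply carry over unchanged from the lemma.

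First I would fix an arbitrary pure-state decomposition $\rho = \sum_k u_k |u_k\rangle\langle u_k|$ with the $|u_k\rangle$ normalized, fix outcomes $x$ and $y$, and estimate a single term of the double maximum defining $c(M,N,\rho)$. Since each effect is positive semidefinite, $M_x^\dagger = M_x$ and $N_y^\dagger = N_y$, so the numerator equals $\langle u_k|M_x N_y|u_k\rangle$. Inserting $M_x = M_x^{1/2}M_x^{1/2}$ and $N_y = N_y^{1/2}N_y^{1/2}$, and setting $|a\rangle := M_x^{1/2}|u_k\rangle/\|M_x^{1/2}|u_k\rangle\|$ and $|b\rangle := N_y^{1/2}|u_k\rangle/\|N_y^{1/2}|u_k\rangle\|$ (both unit vectors whenever the denominators do not vanish; if a denominator vanishes the numerator does too, and such $x,y$ can be discarded), one obtains
\[
\frac{\big|\Tr[M_x^\dagger N_y\,|u_k\rangle\langle u_k|]\big|}{\|M_x^{1/2}|u_k\rangle\|\,\|N_y^{1/2}|u_k\rangle\|} = \big|\langle a|\,M_x^{1/2}N_y^{1/2}\,|b\rangle\big| \leq \big\|M_x^{1/2}N_y^{1/2}\,|b\rangle\big\| \leq \big\|M_x^{1/2}N_y^{1/2}\big\|,
\]
by the Cauchy--Schwarz inequality together with the operator-norm identity $\|\bigcdot\| := \max\{\|\bigcdot u\| : \|u\| = 1\}$. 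As this estimate is uniform in $k$ and in the choice of decomposition, taking the maxima over $x,y$, over $k$, and over all decompositions yields $c(M,N,\rho) \leq \max_{x,y}\|M_x^{1/2}N_y^{1/2}\| = c(M,N)$.

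Combining this with Lemma~\ref{mixedmulemma} finishes the argument: since $0 < c(M,N,\rho) \leq c(M,N)$ and $-2\log$ is decreasing,
\[
\mathrm{H}_{\alpha}(M) + \mathrm{H}_{\beta}(N) \geq -2\log c(M,N,\rho) \geq -2\log c(M,N),
\]
which is exactly the corollary. The proof is short and the only step with genuine content is the operator-norm estimate above; the one subtlety --- and it is precisely what makes the resulting bound state-independent --- is that this inequality must hold \emph{uniformly} over every unit vector appearing in every pure-state decomposition of $\rho$, which it does. I do not anticipate any real obstacle here.
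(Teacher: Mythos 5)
Your proof is correct and follows exactly the route the paper takes: it deduces the corollary from Lemma~\ref{mixedmulemma} by bounding the state-dependent overlap $c(M,N,\rho)$ above by $c(M,N)=\max_{x,y}\|M_x^{1/2}N_y^{1/2}\|$ via the operator-norm property, then using monotonicity of $-2\log$. The paper only states this step in one line, so your Cauchy--Schwarz argument is simply the worked-out version of the same idea.
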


There are two ways of proving Maassen-Uffink uncertainty relation for quantum processes. The first one is to apply Naimark's dilation theorem to the CJ matrix $J_{\mathrm{\Psi}}^{A B} $ with respect to the process $\mathrm{\Psi}$, followed by Riesz theorem. Another way is to use corollary \ref{mixedmucorollary} directly, which has been adopted here. 

For probability distribution $\p$ specified by the process-channel measurement $\mathcal{T}_{1}$, the probability associated with measurement outcome $x$, as shown in (\ref{pro}), is $p_{x} = \Tr [ E_{x}  J_{\mathrm{\Psi}}^{A B} ]$, and hence
\begin{align}
\frac{p_{x}}{d_{A}} = \Tr \left[ E_{x}  ~\rho_{\mathrm{\Psi}}^{A B} \right],
\end{align}
with $\rho_{\mathrm{\Psi}}^{A B} := J_{\mathrm{\Psi}}^{A B}/d_{A}$ being a bipartite quantum state in $D (A \otimes B)$, since $\rho_{\mathrm{\Psi}}^{A B} \geqslant 0$ (due to the CP of $\mathrm{\Psi}$) and $\Tr [\rho_{\mathrm{\Psi}}^{A B}] =1$ (due to the TP of $\mathrm{\Psi}$). Therefore, the probability distribution $\frac{1}{d_{A}} \p \oplus \frac{d_{A}-1}{d_{A}}$ can be seen as derived by performing POVM $\{ \tilde{E}_{x} \}_{x=1}^{m+1}$ to the state $\rho_{\mathrm{\Psi}}^{A B}$. Consider also the probability distribution $\frac{1}{d_{A}} \q \oplus \frac{d_{A}-1}{d_{A}}$ obtained by implementing POVM $\{ \tilde{F}_{y} \}_{y=1}^{n+1}$ to $\rho_{\mathrm{\Psi}}^{A B}$, then corollary \ref{mixedmucorollary} immediately implies that
\begin{align} 
\mathrm{H}_{\alpha} ( \tilde{E} )
+
\mathrm{H}_{\beta} ( \tilde{F} ) \geqslant 
-2\log c ( \mathcal{T}_{1} , \mathcal{T}_{2} ),
\end{align}
with $1/\alpha + 1/\beta = 2$. Written in full, that is
\begin{align} 
\mathrm{H}_{\alpha} ( \frac{1}{d_{A}} \p \oplus \frac{d_{A}-1}{d_{A}} )
+
\mathrm{H}_{\beta} ( \frac{1}{d_{A}} \q \oplus \frac{d_{A}-1}{d_{A}} ) \geqslant 
-2\log c ( \mathcal{T}_{1} , \mathcal{T}_{2} ),
\end{align}
as required.


\subsection{Proof of Theorem \ref{uuroplusthm}}

Our goal in this section is to prove (\ref{uuroplus}). Let us first consider the following question : for any semi-definite positive operator $W \in L (A \otimes B)$, what is the maximal value of $\Tr [ W J_{\mathrm{\Psi}}^{A B} ]$ for all quantum process? In particular, we are interested in
\begin{align}\label{primalsdp}
\max \quad & \Tr [ W J_{\mathrm{\Psi}}^{A B} ] \notag\\
\text{s.t.} \quad & \Tr_{B} J_{\mathrm{\Psi}}^{A B} = \mathds{1}^{A},\notag\\
& \quad ~ ~ J_{\mathrm{\Psi}}^{A B} \geqslant 0.
\end{align}
which is a semidefinite programming (SDP). The Lagrangian associated to the primal SDP in (\ref{primalsdp}) is given by :
\begin{align}
\mathscr{L}
&=
\Tr [ W J_{\mathrm{\Psi}}^{A B} ] 
+ 
\Tr_{A} [ X \left( \mathds{1}^{A} - \Tr_{B} J_{\mathrm{\Psi}}^{A B} \right) ] +
\Tr [ Y J_{\mathrm{\Psi}}^{A B} ]
\notag\\
&=
\Tr [ X ] 
+
\Tr [ \left( W + Y - X \otimes \mathds{1}^{A} \right) 
J_{\mathrm{\Psi}}^{A B} ],
\end{align}
where we have introduced dual variables, i.e. Lagrange multipliers, $X$, a Hermitian operator acting on Hilbert space $\mathcal{H}^{A}$, and $Y$, a semi-definite positive operator acting on Hilbert space $\mathcal{H}^{A} \otimes \mathcal{H}^{B}$, to ensure that the Lagrangian $\mathscr{L}$ is always greater than the objective function whenever the primal constraints are satisfied. Therefore, in this case, the dual SDP is obtained by minimizing over all dual variables :
\begin{align}\label{dualsdp}
\min \quad & \Tr [ X ] \notag\\
\text{s.t.} \quad & X \otimes \mathds{1}^{B} \geqslant W,
\end{align}
Here the strong duality holds since the primal SDP is finite and strictly feasible, which guarantees that the optimal value of dual coincides with the optimal value of the primal problem. Actually, the optimal value is related with the conditional min-entropy mentioned in our main text. We now move to the definition of conditional min-entropy \cite{renner2005security}, which is the main object of study in this section.

\begin{definition}[Min-entropy]
Let $\rho \in D (A\otimes B)$ be a bipartite quantum operator. The min-entropy of $A$ conditioned on $B$ is defined by
\begin{align}
\mathrm{H}_{\min} (A|B)_{\rho}
:=
- \inf_{\sigma} \mathrm{D}_{\infty} ( \, \rho \, \| \, \mathds{1}^{A} \otimes \sigma),
\end{align}
where the infimum ranges over all semidefinite positive operator $\sigma \in L(B)$, with
\begin{align}
\mathrm{D}_{\infty} ( \, \tau \, \| \, \eta \, ) 
:=
\inf \{ \, \lambda \in \mathds{R} \, | \, 2^{\lambda} \eta \geqslant \tau \, \}.
\end{align}
\end{definition}

\noindent Now it is clear from the context that the optimal value of (\ref{dualsdp}) equals to $2^{- \mathrm{H}_{\min} (B|A)_{W}}$, which is equivalent to say that for any quantum process $\mathrm{\Psi} : A \rightarrow B$, we have
\begin{align}\label{minentropy}
\max_{\mathrm{\Psi}} \Tr [ W J_{\mathrm{\Psi}}^{A B} ] 
=
2^{- \mathrm{H}_{\min} (B|A)_{W}}.
\end{align}

We now move on to discuss the sum of the first $k$ largest components of $\p \oplus \q$, i.e.
\begin{align}
\max_{ | R | + | S | = k } \max_{ \mathrm{\Psi} }
\left( \sum_{ x \in R } p_{x} + \sum_{ y \in S } q_{y} \right)
=
&\max_{ \mathcal{I}_{k} } \max_{ \mathrm{\Psi} }
\Tr [ \left( \sum_{ z \in \mathcal{I}_{k} } G_{z} \right) 
J_{\mathrm{\Psi}}^{A B} ] \notag\\
=
&\max_{ \mathcal{I}_{k} } \max_{ \mathrm{\Psi} }
\Tr [ G\left(\mathcal{I}_{k}\right) 
J_{\mathrm{\Psi}}^{A B} ] \notag\\
=
&\max_{ \mathcal{I}_{k} } 
2^{- \mathrm{H}_{\min} (B|A)_{G\left(\mathcal{I}_{k}\right)}} \notag\\
=
&s_{k}.
\end{align}
with $R \subset \left\{1, \ldots, n \right\}$, $S \subset \left\{1, \ldots, m \right\}$, and $|\bigcdot|$ stands for the cardinality of set $\bigcdot$. Here to arrive at the third line we used the result shown in (\ref{minentropy}), and the last line follows from the definition of $s_{k}$. Noticing now that when the the first $k$ largest components of $\p \oplus \q$ is upper-bounded by the quantity $s_{k}$, the vector $\p \oplus \q$ is thus majorized by $( s_{1} , s_{2} - s_{1} , s_{3} - s_{2} , \ldots, 0 )$. We finally remark that for the sum of the first $k$ largest components, $s_{k}$ is tight for all $k$, since there always exists a quantum process, which might not be unique, such that $\max_{ | R | + | S | = k } ( \sum_{ x \in R } p_{x} + \sum_{ y \in S } q_{y} ) = s_{k}$. Even though each $s_{k}$ is tight, their collection $\sss$ is not always guaranteed to be optimal. The optimal bound for $\p \oplus \q$ will be given in the next section by considering the lattice structure of majorization.


\subsection{Majorization Lattice}

In this section we turn our attention to the concept of lattice and employ majorization lattice to study the optimal bounds of UURs for quantum processes. For simplicity, all vectors considered in this section belongs to the set $\mathds{R}^{d}$. Let us start with the definition of {\it Lattice}, which is

\begin{definition}[Lattice] 
A quadruple $(S, \sqsubset, \wedge, \vee)$ is called lattice if $\ \sqsubset$ is a partial oder on the set $S$ such that for all $\p$, $\q \in S$ there exists a unique greatest lower bound (GLB) $\p \wedge \q$ and a unique least upper bound (LUB) $\p \vee \q$ satisfying 
\begin{align}
\x \sqsubset \p, \, \x \sqsubset \q 
&\Rightarrow 
\x \sqsubset \p \wedge \q, \notag\\
\p \sqsubset \y, \, \q \sqsubset \y
&\Rightarrow
\p \vee \q \sqsubset \y.
\end{align}
for each $\x$, $\y \in S$.
\end{definition}

\noindent A special class of lattices are those which have GLB and LUB for all their subsets, namely complete lattice

\begin{definition}[Complete Lattice]
A lattice $(S, \sqsubset, \wedge, \vee)$ is called complete, if for any nonempty subset $R \subset S$, it has a LUB, denoted by $\vee R$ and a GLB, denoted by $\wedge R$. More precisely, if $\x, \y \in S$ such that $\x \sqsubset R \sqsubset \y$, i.e. $\x \sqsubset \p \sqsubset \y$ for all $\p \in R$, we thus have $\x \sqsubset \wedge R$ and $\vee R \sqsubset \y$.
\end{definition}

Before interpreting the majorization lattice, let us first introduce some notations that will be used frequently in this section.
\begin{align}\label{sets}
\mathds{R}^{d}_{+} 
& := 
\{ \x \in \mathds{R}^{d} \, \| \, x_{k} \geqslant 0, \, \forall 1 \leqslant k \leqslant d \, \}
\notag\\
\mathds{R}^{d, \, \downarrow}_{+} 
& := 
\{ \x \in \mathds{R}^{d}_{+} \, \| \, x_{k} \geqslant x_{k+1}, \, \forall 1 \leqslant k \leqslant d-1 \, \}
\notag\\
\mathds{P}_{n}^{d} 
&:= 
\{ \x \in \mathds{R}^{d}_{+} \, \| \, \sum_{k} x_{k} = n \, \}
\notag\\
\mathds{P}_{n}^{d, \, \downarrow} 
&:= 
\mathds{P}_{n}^{d} \cap \mathds{R}^{d, \, \downarrow}_{+}
\end{align}
With these notations, we now introduce the relation between lattice and majorization, which was first established by the notion of weak majorization in Bapat's work \cite{BAPAT199159}.

\begin{definition}[Weak Majorization]
For $\x = (x_{k})_{k}$, $\y = (y_{k})_{k} \in \mathds{R}^{d}$, we say that $\x$ is weakly majorized by $\y$, denoted by $\x \prec_{\text{w}} \y$ if $\sum_{k=1}^{i} x_{k}^{\downarrow} \leqslant \sum_{k=1}^{i} y_{k}^{\downarrow}$ for all $1 \leqslant i \leqslant d$.
\end{definition} 

Due to the importance of majorization lattice, we will review historical developments of this topic briefly. Some useful results will also be given in this section. In 1991, during Bapat's investigations of the singular values of complex square matrices \cite{BAPAT199159}, the completeness of weak majorization on $\mathds{R}^{d, \, \downarrow}_{+}$ was obtained as a by-product.

\begin{lem}[Bapat]
Let $S \subset \mathds{R}^{d}_{+}$ be a nonempty set, then there exists a unique GLB, denoted by $\wedge S$, under weak majorization ``$\prec_{\text{w}}$''.
\end{lem}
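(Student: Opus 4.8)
\textbf{Proof proposal for Bapat's Lemma (completeness of weak majorization on $\mathds{R}^{d}_{+}$).}
The plan is to construct the greatest lower bound $\wedge S$ explicitly via its partial sums and then verify it is indeed the GLB. First I would fix, for each index $i$ with $1 \leqslant i \leqslant d$, the quantity
\begin{align}
\mu_{i} := \inf_{\x \in S} \sum_{k=1}^{i} x_{k}^{\downarrow},
\end{align}
which is well-defined and finite since $S \subset \mathds{R}^{d}_{+}$ forces each partial sum of a decreasingly ordered vector to lie in $[0, \sum_{k} x_{k}]$, and in any case the infimum of a set of non-negative reals exists in $[0, \infty)$. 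The candidate GLB is the vector $\bb \in \mathds{R}^{d,\,\downarrow}_{+}$ whose partial sums are the ``concave envelope'' of the sequence $(\mu_i)$: concretely, set $\beta_{i}$ to be the largest value such that the sequence $(\sum_{k\le i}\beta_k)_i$ is concave (equivalently $\beta_1 \ge \beta_2 \ge \cdots$) and still lies below every curve $i \mapsto \sum_{k\le i} x_k^{\downarrow}$ for $\x \in S$. Equivalently, $\sum_{k=1}^{i}\beta_k = \min\{\text{concave sequences dominated by }(\mu_j)\}$ evaluated at $i$; taking the concave envelope is exactly what repairs the failure of $(\mu_i)$ itself to be the partial-sum sequence of a non-increasing vector.

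Next I would check the three required properties. (1) \emph{$\bb \in \mathds{R}^{d,\,\downarrow}_{+}$:} this is immediate from the concavity of its partial-sum sequence together with $\mu_1 \ge 0$, which gives $\beta_1 \ge \beta_2 \ge \cdots \ge 0$. (2) \emph{$\bb$ is a lower bound:} for every $\x \in S$ and every $i$, $\sum_{k\le i}\beta_k \le \mu_i \le \sum_{k\le i} x_k^{\downarrow}$, which is precisely $\bb \prec_{\text{w}} \x$. (3) \emph{$\bb$ is the greatest lower bound:} suppose $\y \in \mathds{R}^{d}_{+}$ satisfies $\y \prec_{\text{w}} \x$ for all $\x \in S$. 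Then $\sum_{k\le i} y_k^{\downarrow} \le \sum_{k\le i} x_k^{\downarrow}$ for all $\x$, so $\sum_{k\le i} y_k^{\downarrow} \le \mu_i$; but the sequence $i \mapsto \sum_{k\le i} y_k^{\downarrow}$ is automatically concave (partial sums of a non-increasing vector), so by maximality of the concave envelope $\sum_{k\le i} y_k^{\downarrow} \le \sum_{k\le i}\beta_k$ for all $i$, i.e. $\y \prec_{\text{w}} \bb$. Uniqueness of the GLB then follows from antisymmetry of $\prec_{\text{w}}$ on $\mathds{R}^{d,\,\downarrow}_{+}$ (two mutually weakly-majorizing decreasingly-ordered vectors have identical partial sums, hence are equal).

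\textbf{Main obstacle.} The delicate point is step (3), and in particular the claim that taking the concave (lower) envelope of $(\mu_i)$ does not destroy the lower-bound property while producing the \emph{largest} such concave sequence — i.e. that the envelope is still pointwise $\le$ every curve $\sum_{k\le i} x_k^{\downarrow}$. This holds because each of those curves is itself concave, so a concave function lying below the pointwise infimum $(\mu_i)$ of concave functions is automatically below each of them; the real content is the elementary lattice-theoretic fact that the set of concave sequences dominated by a fixed sequence is closed under pointwise supremum, which furnishes a well-defined maximal element. A secondary subtlety worth flagging is that antisymmetry of $\prec_{\text{w}}$ genuinely requires restricting to $\mathds{R}^{d,\,\downarrow}_{+}$ (or passing to the quotient by permutations), so the uniqueness statement should be read in that sense; the construction above lands in $\mathds{R}^{d,\,\downarrow}_{+}$ by design, so this causes no trouble.
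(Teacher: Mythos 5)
The paper does not actually prove this lemma---it is imported verbatim from Bapat's work \cite{BAPAT199159}---so your argument has to stand on its own, and the step you yourself flag as delicate is where it breaks as written. You justify the existence of a largest concave sequence dominated by $(\mu_i)$ by the claim that ``the set of concave sequences dominated by a fixed sequence is closed under pointwise supremum.'' That claim is false in general: take the dominating sequence with values $1,1,1.5$ at $i=1,2,3$ (and $0$ at $i=0$); both $1,1,1$ and $0.5,1,1.5$ are concave minorants, but their pointwise supremum $1,1,1.5$ has increments $1,0,0.5$ and is not concave, and in fact no greatest concave minorant of that sequence exists (it would be forced to equal $1,1,1.5$). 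So the ``maximality of the concave envelope'' on which your step (3) rests is not available from the reason you give, and the envelope you posit is not well defined for an arbitrary $(\mu_i)$.

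The repair is simpler than the machinery you invoke and removes the need for any envelope: each curve $i\mapsto\sum_{k\leqslant i}x_k^{\downarrow}$ is non-decreasing and concave in $i$ (with value $0$ at $i=0$), and a pointwise \emph{infimum} of such curves is again non-decreasing and concave, since $\mu_{i+1}+\mu_{i-1}\leqslant\inf_{\x\in S}\bigl(\sum_{k\leqslant i+1}x_k^{\downarrow}+\sum_{k\leqslant i-1}x_k^{\downarrow}\bigr)\leqslant 2\mu_i$. Hence $\beta_i:=\mu_i-\mu_{i-1}$ is automatically non-negative and non-increasing, $\bb\in\mathds{R}^{d,\,\downarrow}_{+}$, and your steps (2) and (3) go through verbatim with $\sum_{k\leqslant i}\beta_k=\mu_i$: any lower bound $\y$ satisfies $\sum_{k\leqslant i}y_k^{\downarrow}\leqslant\mu_i$, i.e.\ $\y\prec_{\text{w}}\bb$, and uniqueness follows from antisymmetry on sorted vectors exactly as you say. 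Note that your instinct about where a flattening step is needed is backwards: it is the \emph{least upper bound}, built from pointwise suprema of concave partial-sum curves, that can fail concavity and requires the flatness process, which is why in Corollary \ref{optimalmajorization} one has $\wedge S=\ab_{S}$ with no correction while $\vee S=\mathcal{F}(\bb_{S})$.
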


\begin{lem}[Bapat]
Let $S \subset \mathds{R}^{d}_{+}$ be a bounded set, i.e. $\x \prec_{\text{w}} S \prec_{\text{w}} \y$ for some $\x$ and $\y \in \mathds{R}^{d}_{+}$, then there exists a unique LUB, denoted by $\vee S$, under weak majorization ``$\prec_{\text{w}}$''.
\end{lem}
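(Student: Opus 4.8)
The plan is to follow Bapat's original argument for weak majorization \cite{BAPAT199159}, adapting only the boundedness hypothesis that distinguishes the LUB from the GLB case. The key observation is that weak majorization $\x \prec_{\text{w}} \y$ is equivalent to the finite system of inequalities $S_i(\x) \leqslant S_i(\y)$ for $1 \leqslant i \leqslant d$, where $S_i(\x) := \sum_{k=1}^{i} x_k^{\downarrow}$ denotes the $i$-th partial sum of the decreasingly-rearranged vector. Each $S_i$ is the pointwise maximum of finitely many linear functionals (over the choice of which $i$ coordinates to sum), hence a continuous convex function on $\mathds{R}^d$, and it is monotone under $\prec_{\text{w}}$ by definition. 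So the whole problem is recast as: given a bounded family $\{\x^{(\alpha)}\}_{\alpha} = S$, find the $\prec_{\text{w}}$-least vector dominating all of them.

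First I would define, for each $i$, the scalar $\mu_i := \sup_{\x \in S} S_i(\x)$, which is finite precisely because $S$ is bounded above (there is a $\y$ with $S_i(\x) \leqslant S_i(\y) < \infty$ for all $\x \in S$). The natural candidate for $\vee S$ is the vector whose partial sums are the ``concave envelope'' or running correction of the sequence $(\mu_1, \dots, \mu_d)$: set $v_k := \mu_k - \mu_{k-1}$ tentatively (with $\mu_0 := 0$), but since this need not be non-increasing, one replaces $(\mu_i)$ by its least concave majorant $(\bar\mu_i)$ — the smallest concave sequence lying above all $\mu_i$ — and takes $\vee S := (\bar\mu_1, \bar\mu_2 - \bar\mu_1, \dots, \bar\mu_d - \bar\mu_{d-1})$. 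By construction this vector lies in $\mathds{R}^{d,\,\downarrow}_+$: concavity of $(\bar\mu_i)$ makes the increments non-increasing, and non-negativity follows since each $\mu_i \geqslant 0$ and $\bar\mu_i$ is non-decreasing (any individual $x_k^{\downarrow}\geq 0$ forces $S_i$ non-decreasing in $i$ on the relevant domain). One then checks the two lattice axioms: (i) $\x \prec_{\text{w}} \vee S$ for every $\x \in S$, because $S_i(\x) \leqslant \mu_i \leqslant \bar\mu_i = S_i(\vee S)$; and (ii) if $\x \prec_{\text{w}} \y$ for all $\x \in S$, then $S_i(\y) \geqslant \mu_i$ for all $i$, and since $S_i(\y)$ is itself a concave sequence in $i$ dominating $(\mu_i)$, minimality of the least concave majorant gives $S_i(\y) \geqslant \bar\mu_i = S_i(\vee S)$, i.e. $\vee S \prec_{\text{w}} \y$. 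Uniqueness is immediate: two least upper bounds each dominate the other, and $\prec_{\text{w}}$ is antisymmetric on $\mathds{R}^{d,\,\downarrow}_+$.

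The main obstacle — and the only genuinely delicate point — is verifying that the least concave majorant $(\bar\mu_i)$ of the finite sequence $(\mu_i)$ is well-defined, equals the partial-sum sequence of some vector in $\mathds{R}^{d,\,\downarrow}_+$, and most importantly that condition (ii) really goes through: one must know that $i \mapsto S_i(\y)$ is concave for any $\y \in \mathds{R}^{d,\,\downarrow}_+$ so that ``dominating the $\mu_i$'' already forces ``dominating the $\bar\mu_i$''. This concavity is exactly the statement that consecutive gaps $S_{i+1}(\y) - S_i(\y) = y_{i+1}^{\downarrow}$ are non-increasing, which holds by definition of the decreasing rearrangement — so the obstacle is really just bookkeeping, but it is where all the content sits. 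I would also take care with the supremum: since $S$ may be infinite, $\mu_i$ is a genuine supremum and may not be attained by a single $\x \in S$, but this causes no trouble because the defining inequalities of $\vee S$ only involve the scalars $\mu_i$, not any particular extremal element. Finally I would note the parallel with Bapat's GLB lemma, where no boundedness is needed because partial sums are automatically bounded below by $0$; here boundedness above is the precise extra hypothesis that makes each $\mu_i$ finite, which is the whole reason the two lemmas are stated asymmetrically.
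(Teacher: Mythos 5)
Your argument is correct, but it is worth noting that the paper does not actually prove this statement at all: it is imported verbatim as a known result of Bapat \cite{BAPAT199159}, and the paper only later (in its Corollary on optimal bounds) builds the related machinery via the partial-sum vectors $\ab_{S}$, $\bb_{S}$ and the Cicalese--Vaccaro flatness process $\mathcal{F}$. Your construction is essentially Bapat's own route and coincides with that machinery: the scalars $\mu_i=\sup_{\x\in S}S_i(\x)$ are the cumulative sums of the paper's $\bb_{S}$, and taking the least concave majorant $(\bar\mu_i)$ and then increments is exactly what the flatness process $\mathcal{F}(\bb_{S})$ computes, so your proof in effect supplies the missing argument behind both the quoted lemma and the paper's later Corollary \ref{optimalmajorization}. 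The steps all go through: finiteness of $\mu_i$ uses only the upper bound $\y$ (correctly explaining the asymmetry with the GLB lemma), the increments of $\bar\mu$ are non-increasing by concavity and non-negative because $\bar\mu$ is the least concave majorant of a non-decreasing sequence with $\bar\mu_d=\mu_d$ (a one-line argument you gesture at but could state explicitly), and the least-upper-bound property follows since $i\mapsto S_i(\y)$ is concave for sorted $\y$, hence dominates $\bar\mu$ whenever it dominates $\mu$. The only caveat, shared with the paper's own phrasing of the lemma, is that uniqueness holds only within $\mathds{R}^{d,\,\downarrow}_+$ (or up to rearrangement), since $\prec_{\text{w}}$ is merely a preorder on $\mathds{R}^{d}_{+}$ and any permutation of $\vee S$ is again a least upper bound; your appeal to antisymmetry on the sorted cone handles exactly this and matches the paper's own remark on the failure of antisymmetry for unsorted vectors.
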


\noindent Then, it can be shown that, for the set $\mathds{P}_{n}^{d, \, \downarrow}$, the quadruple $(\mathds{P}_{n}^{d, \, \downarrow} , \prec_{\text{w}}, \wedge, \vee)$ is bounded since 
\begin{align}
(n/d, \ldots, n/d) 
\prec_{\text{w}} 
\mathds{P}_{n}^{d, \, \downarrow} 
\prec_{\text{w}} 
(n, 0, \ldots, 0), 
\end{align}
which immediately implies that for any nonempty subset $S \subset \mathds{P}_{n}^{d, \, \downarrow} \subset \mathds{R}^{d}_{+}$, it is bounded and has unique GLB $\wedge S$ and LUB $\vee S$. Thus, $\mathds{P}_{n}^{d, \, \downarrow}$ is complete under ``$\prec_{\text{w}} $''.

\begin{cor}\label{weakmajorizationlattice}
The quadruple $(\mathds{P}_{n}^{d, \, \downarrow} , \prec_{\text{w}}, \wedge, \vee)$ forms a complete lattice.
\end{cor}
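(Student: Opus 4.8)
The plan is to read off Corollary~\ref{weakmajorizationlattice} from Bapat's two lemmas together with the boundedness chain $(n/d,\dots,n/d) \prec_{\text{w}} \mathds{P}_{n}^{d,\downarrow} \prec_{\text{w}} (n,0,\dots,0)$ displayed just above it; the only real work is to confirm that the greatest and least bounds, which Bapat's lemmas produce in $\mathds{R}^{d,\downarrow}_{+}$, actually land back inside the sum-$n$ slice $\mathds{P}_{n}^{d,\downarrow}$, so that they serve as bounds \emph{within} that set. I would first note that on $\mathds{P}_{n}^{d,\downarrow}$ weak majorization $\prec_{\text{w}}$ coincides with ordinary majorization $\prec$ — the inequalities for $i=1,\dots,d-1$ are identical, and the $i=d$ inequality is automatic since both vectors have component sum $n$ — and that it is genuinely antisymmetric there (two sorted nonnegative vectors with the same partial sums are equal), hence a partial order, so the only thing left to establish is the existence of GLB and LUB for arbitrary nonempty subsets.

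Next I would fix a nonempty $S \subset \mathds{P}_{n}^{d,\downarrow}$. The chain above shows $S$ is bounded in $\mathds{R}^{d}_{+}$, so Bapat's lemmas give a unique GLB $\wedge S$ and a unique LUB $\vee S$ for $S$ under $\prec_{\text{w}}$; since $\prec_{\text{w}}$ depends only on the decreasing rearrangement, these may be taken in $\mathds{R}^{d,\downarrow}_{+}$. To place $\wedge S$ in $\mathds{P}_{n}^{d,\downarrow}$: from $\wedge S \prec_{\text{w}} \p$ for every $\p \in S$ one gets $\sum_{k}(\wedge S)_{k}\leqslant n$, while $(n/d,\dots,n/d)$ being a lower bound of $S$ forces $(n/d,\dots,n/d)\prec_{\text{w}}\wedge S$ by the GLB property, hence $\sum_{k}(\wedge S)_{k}\geqslant n$; so the sum equals $n$. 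The identical argument with $(n,0,\dots,0)$ as an upper bound of $S$ gives $\sum_{k}(\vee S)_{k}=n$, so both $\wedge S$ and $\vee S$ lie in $\mathds{P}_{n}^{d,\downarrow}$.

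Finally I would invoke the elementary order-theoretic observation that if the infimum (resp. supremum) of a subset $S$ computed in a poset $P$ happens to belong to a subposet $Q$ with $S\subseteq Q\subseteq P$, then it is the infimum (resp. supremum) of $S$ in $Q$ as well. Applying this with $P=\mathds{R}^{d,\downarrow}_{+}$ and $Q=\mathds{P}_{n}^{d,\downarrow}$ shows $\wedge S$ and $\vee S$ are the GLB and LUB of $S$ inside $\mathds{P}_{n}^{d,\downarrow}$; taking $S=\{\p,\q\}$ produces the binary meet and join, so $(\mathds{P}_{n}^{d,\downarrow},\prec_{\text{w}},\wedge,\vee)$ is a lattice, and letting $S$ be an arbitrary nonempty subset gives completeness. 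The main obstacle is the sum-preservation step, i.e.\ verifying that passing to Bapat's bounds does not leak outside the probability-like slice — but the two-sided pinch between the uniform vector and the point-mass vector makes this immediate, so the argument is short.
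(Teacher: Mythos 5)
Your proof is correct and follows essentially the same route as the paper: invoke Bapat's two lemmas for the GLB and LUB under $\prec_{\text{w}}$ together with the pinch $(n/d,\dots,n/d)\prec_{\text{w}}\mathds{P}_{n}^{d,\downarrow}\prec_{\text{w}}(n,0,\dots,0)$, which bounds every nonempty subset. Your explicit check that $\wedge S$ and $\vee S$ have component sum exactly $n$ and therefore lie inside $\mathds{P}_{n}^{d,\downarrow}$ (so they are bounds within the slice) is a point the paper only spells out later when passing to the majorization lattice, so it is a tightening of the same argument rather than a different approach.
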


\noindent Here we would like to note that for the set $\mathds{P}_{n}^{d}$, weak majorization ``$\prec_{\text{w}}$'' is only a preorder, i.e. a binary relation that is both reflexive and transitive. However, ``$\prec_{\text{w}}$'' is not antisymmetric; that is we cannot obtain $\x = \y$ when $\x \prec_{\text{w}} \y$ and $\y \prec_{\text{w}} \x$ holds. For example, by taking $\x = (1,0)$ and $\y = (0,1) \in \mathds{P}_{1}^{d}$, we have
$(1,0) \prec_{\text{w}} (0,1)$ and $(0,1) \prec_{\text{w}} (1,0)$, but $(1,0) \neq (0,1)$. Accordingly, $(\mathds{P}_{n}^{d} , \prec_{\text{w}}, \wedge, \vee)$ is not even a lattice. Weak majorization ``$\prec_{\text{w}}$'' becomes a partial order when all the probability distribution vectors are arranged in non-increasing order, i.e. embedded into $\mathds{P}_{n}^{d, \, \downarrow}$. 

We now demonstrate that not only $(\mathds{P}_{n}^{d, \, \downarrow} , \prec_{\text{w}}, \wedge, \vee)$, but also $(\mathds{P}_{n}^{d, \, \downarrow} , \prec, \wedge, \vee)$ with majorization ``$\prec$'' forms a complete lattice. According to corollary \ref{weakmajorizationlattice}, there exist the GLB $\wedge S$ and LUB $\vee S$ for any nonempty subset $S$ of $\mathds{P}_{n}^{d, \, \downarrow}$, such that
\begin{align}
\wedge S \prec_{\text{w}} S \prec_{\text{w}} \vee S.
\end{align}
By considering the trivial bounds of subset $S \subset \mathds{P}_{n}^{d, \, \downarrow}$, i.e. $(n/d, \ldots, n/d)$, $(n,0, \ldots, 0) \in \mathds{P}_{n}^{d, \, \downarrow}$, which satisfies $(n/d, \ldots, n/d) \prec_{\text{w}} S \prec_{\text{w}} (n,0, \ldots, 0)$, we know that 
\begin{align}
(n/d, \ldots, n/d) \prec_{\text{w}} &\wedge S 
\prec_{\text{w}} (n,0, \ldots, 0), \notag\\
(n/d, \ldots, n/d) \prec_{\text{w}} &\vee S 
\prec_{\text{w}} (n,0, \ldots, 0),
\end{align}
which implies $\| \wedge S \, \|_{1} = \| \vee S \, \|_{1} = n$, and hence $\wedge S \prec S \prec \vee S$ holds for majorization ``$\prec$''. Till now we have shown that $\wedge S$ and $\vee S$ are lower bound and upper bound for $S$ respectively. Now it is time to prove that they are optimal under majorization. For any vector $\x \prec S$, it is also a lower bound for weak majorization, i.e. $\x \prec_{\text{w}} S$, and hence $\x \prec_{\text{w}} \wedge S$. Due to the fact that $\x \in \mathds{P}_{n}^{d, \, \downarrow}$, we have $\| \x \, \|_{1} = \| \wedge S \, \|_{1} = n$, and thus $\x \prec \wedge S$. Therefore $\wedge S$ is the GLB for $S$ under majorization. Similarly, we have that $\vee S$ is the LUB for $S$ under majorization, which leads to the following statement

\begin{cor}\label{majorizationlattice}
The quadruple $(\mathds{P}_{n}^{d, \, \downarrow} , \prec, \wedge, \vee)$ forms a complete lattice.
\end{cor}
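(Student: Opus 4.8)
The plan is to bootstrap from Corollary~\ref{weakmajorizationlattice}. We already know that $(\mathds{P}_{n}^{d,\,\downarrow},\prec_{\text{w}},\wedge,\vee)$ is a complete lattice, so for every nonempty $S\subset\mathds{P}_{n}^{d,\,\downarrow}$ the weak-majorization GLB $\wedge S$ and LUB $\vee S$ exist. The strategy is to show that these same elements remain in $\mathds{P}_{n}^{d,\,\downarrow}$ and in fact serve as the GLB and LUB with respect to ordinary majorization $\prec$, which then gives the claim.

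First I would pin down the $\ell_{1}$-norms of these extremal elements. For any $S\subset\mathds{P}_{n}^{d,\,\downarrow}$ we have the trivial enclosure $(n/d,\ldots,n/d)\prec_{\text{w}}S\prec_{\text{w}}(n,0,\ldots,0)$ with both endpoints lying in $\mathds{P}_{n}^{d,\,\downarrow}$. Since $\wedge S$ is the greatest lower bound and $\vee S$ the least upper bound for $\prec_{\text{w}}$, transitivity yields $(n/d,\ldots,n/d)\prec_{\text{w}}\wedge S\prec_{\text{w}}(n,0,\ldots,0)$ and $(n/d,\ldots,n/d)\prec_{\text{w}}\vee S\prec_{\text{w}}(n,0,\ldots,0)$; comparing the $d$-th partial sums (which equal $n$ at both endpoints) then forces $\|\wedge S\|_{1}=\|\vee S\|_{1}=n$, hence $\wedge S,\vee S\in\mathds{P}_{n}^{d,\,\downarrow}$. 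The key observation is then that on the slice $\mathds{P}_{n}^{d,\,\downarrow}$ of fixed total $n$, weak majorization and majorization coincide: for $\x,\y$ with $\|\x\|_{1}=\|\y\|_{1}=n$, the relation $\x\prec_{\text{w}}\y$ is exactly $\x\prec\y$. Therefore $\wedge S\prec S\prec\vee S$, so $\wedge S$ and $\vee S$ are, respectively, a lower bound and an upper bound for $S$ under $\prec$.

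It then remains to verify optimality. If $\x\in\mathds{P}_{n}^{d,\,\downarrow}$ satisfies $\x\prec S$, then $\x\prec_{\text{w}}S$, hence $\x\prec_{\text{w}}\wedge S$ by the GLB property in the weak-majorization lattice; since $\|\x\|_{1}=\|\wedge S\|_{1}=n$, this upgrades to $\x\prec\wedge S$, so $\wedge S$ is the greatest lower bound of $S$ under $\prec$. The dual argument handles $\vee S$. Uniqueness follows from antisymmetry of $\prec$ on $\mathds{P}_{n}^{d,\,\downarrow}$ (which fails on $\mathds{P}_{n}^{d}$ but is restored once components are arranged non-increasingly), completing the proof that $(\mathds{P}_{n}^{d,\,\downarrow},\prec,\wedge,\vee)$ is a complete lattice. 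The one step that requires genuine care will be the norm-preservation claim for the weak-majorization extremal elements; once that is secured, everything else is a formal consequence of the equivalence of $\prec_{\text{w}}$ and $\prec$ on the simplex slice.
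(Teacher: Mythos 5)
Your argument is correct and follows essentially the same route as the paper: invoke the weak-majorization complete lattice, pin the $\ell_{1}$-norms of $\wedge S$ and $\vee S$ to $n$ via the trivial enclosure between $(n/d,\ldots,n/d)$ and $(n,0,\ldots,0)$, and then upgrade $\prec_{\text{w}}$ to $\prec$ on the fixed-sum slice, with optimality and antisymmetry handled exactly as in the paper's proof.
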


\noindent A special class of Corollary \ref{majorizationlattice} is that $(\mathds{P}_{1}^{d, \, \downarrow} , \prec, \wedge, \vee)$ forms a complete lattice, i.e. the probability simplex in finite dimensional space with non-increasing order forms a complete lattice \cite{992785}. Moreover, this result has been used to derive the optimal common resource in majorization-based resource theories \cite{Bosyk_2019}, and optimal direct-sum UURs for quantum states \cite{yuan2019strong,doi:10.1002/andp.201900143} recently.

Now it is clear from the context that the optimal bound for $\p \otimes \q$ exists. Define the set $S_{\otimes}^{\text{pre}} := \{ \p \otimes \q \}$, where $\p$ and $\q$ are obtained by performing process-channel measurements $\mathcal{T}_{1}$ and $\mathcal{T}_{2}$ to a quantum process respectively. Then the set $S_{\otimes} := S_{\otimes}^{\text{pre}} \cap \mathds{P}_{1}^{d, \, \downarrow} \subset \mathds{P}_{1}^{d, \, \downarrow}$, and our corollary \ref{majorizationlattice} immediately implies the existence of $\wedge S_{\otimes}$ and $\vee S_{\otimes}$ under majorization. 
\begin{align}
\wedge S_{\otimes} 
\prec
\p \otimes \q 
\prec
\vee S_{\otimes},
\end{align}
Even though corollary \ref{majorizationlattice} ensures the existence of both the upper and lower bounds of $\p \otimes \q$, it does not teach us how to find them effectively. Note also that, the completeness of $(\mathds{P}_{1}^{d, \, \downarrow} , \prec, \wedge, \vee)$ cannot be applied to the direct-sum form straightway since $\p \oplus \q \notin \mathds{P}_{1}^{d, \, \downarrow}$. In this case, we can define the set $S_{\oplus}^{\text{pre}} := \{ \p \oplus \q \}$, and $S_{\oplus} := S_{\oplus}^{\text{pre}} \cap \mathds{P}_{2}^{d, \, \downarrow} \subset \mathds{P}_{2}^{d, \, \downarrow}$. The existence of the GLB $\wedge S_{\oplus}$ and LUB $\vee S_{\oplus}$ is guaranteed by corollary \ref{majorizationlattice}, which satisfies  
\begin{align}
\wedge S_{\oplus} 
\prec
\p \oplus \q 
\prec
\vee S_{\oplus},
\end{align}
with $\p$ and $\q$ obtained by performing process-channel measurements $\mathcal{T}_{1}$ and $\mathcal{T}_{2}$ to a quantum process respectively.

In order to find the optimal bounds for $S_{\oplus}$, an additional process, namely flatness process, is needed. In 2002, the lattice structure of majorization was revisited by Cicalese and Vaccaro in the study of its supermodularity and subadditivity properties \cite{992785}, and the well-known flatness process $\mathcal{F}$ was introduced.

\begin{definition}[Flatness Process]
\label{flatnessprocessdefinition}
Let $\x \in \mathds{R}^{d}_{+}$ be a vector, and $j$ be the smallest integer in $\left\{2, \ldots, d\right\}$ such that $x_{j}>x_{j-1}$, and $i$ be the greatest integer in $\left\{1, \ldots, j-1\right\}$ such that $x_{i-1} \geqslant (\sum_{k=i}^{j} x_{k})/(j-i+1):=a$. Define
%
\begin{align}\label{eq; mj bound t}
\mathcal{F} (\x) := \left(x_{1}^{\prime}, \ldots, x_{n}^{\prime}\right) \, \text{with} \, 
x_{k}^{\prime} = 
     \begin{cases}
       a & \text{for}\quad k = i, \ldots, j \\
       x_{k} & \text{otherwise.} \\ 
     \end{cases}
\end{align}
\end{definition}

\noindent which satisfies the following lemma

\begin{lem}[Cicalese-Vaccaro]
\label{flatnessprocesslemma}
For any $\x \in \mathds{P}_{n}^{d}$, we have $\mathcal{F} (\x) \in \mathds{P}_{n}^{d, \, \downarrow}$, and $\sum_{i=1}^{k} x_{i} \leqslant \sum_{i=1}^{k} x_{i}^{\prime}$ for all $1 \leqslant k \leqslant d$. Moreover, for all $\y  \in \mathds{P}_{n}^{d, \, \downarrow}$, we have
\begin{align}
\sum_{i=1}^{k} x_{i} \leqslant \sum_{i=1}^{k} y_{i}, 
\quad \forall 1 \leqslant k \leqslant d \quad
\Rightarrow 
\quad
\mathcal{F} (\x) \prec \y.
\end{align}
\end{lem}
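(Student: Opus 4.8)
The plan is to prove Lemma~\ref{flatnessprocesslemma}, which asserts three facts about the flatness process $\mathcal{F}$ from Definition~\ref{flatnessprocessdefinition}: (i) $\mathcal{F}(\x) \in \mathds{P}_{n}^{d,\,\downarrow}$ whenever $\x \in \mathds{P}_{n}^{d}$; (ii) $\sum_{i=1}^{k} x_{i} \leqslant \sum_{i=1}^{k} x_{i}^{\prime}$ for all $k$; and (iii) the optimality statement that $\mathcal{F}(\x)$ is majorized by any $\y \in \mathds{P}_{n}^{d,\,\downarrow}$ dominating $\x$ in the partial-sum sense. The overall strategy is induction on the number of \emph{descents violations}, i.e.\ on the number of indices $j$ with $x_j > x_{j-1}$: each application of $\mathcal{F}$ removes at least one such violation, so finitely many iterations suffice, and it is enough to analyze one step carefully and then compose.

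First I would establish (i). By construction, on the block $k = i,\dots,j$ the new entries are all equal to the average $a = (\sum_{k=i}^{j} x_k)/(j-i+1)$, so within the block monotonicity is automatic. At the left boundary, the choice of $i$ as the \emph{greatest} integer in $\{1,\dots,j-1\}$ with $x_{i-1} \geqslant a$ gives $x'_{i-1} = x_{i-1} \geqslant a = x'_i$; one must also check that no descent violation is created just below $i$, which follows because $x$ was already non-increasing on $\{1,\dots,j-1\}$ (as $j$ is the \emph{smallest} violation index) and replacing a suffix of that run by their common average cannot break monotonicity — here one uses that $a \leqslant x_{i}$ would fail only if the maximality of $i$ were contradicted. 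At the right boundary, since $j$ is the smallest violation, we have $x_{j+1} \leqslant x_j$ (if $j < d$); combined with $a \geqslant $ (something $\leqslant x_j$) one checks $a \geqslant x_{j+1}$ using that $a$ is an average of entries each $\geqslant x_j \geqslant x_{j+1}$ — wait, more carefully, each $x_k$ for $k \in \{i,\dots,j-1\}$ exceeds $x_j$ only if... — the clean argument is that $a$ lies between $\min$ and $\max$ of the averaged block, and the block's entries are all at least $x_{j+1}$ because the sequence up to $j-1$ is non-increasing and $\geqslant x_{j-1} \geqslant$... I would pin this down by noting $x_{i-1} \geqslant a$ and the averaged block is flanked appropriately, so $a \geqslant x_{j+1}$. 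The trace is preserved since averaging does not change sums, giving $\mathcal{F}(\x) \in \mathds{P}_n^{d}$, hence in $\mathds{P}_n^{d,\,\downarrow}$.

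Next, (ii): for $k < i$ or $k \geqslant j$ the partial sums $\sum_{i=1}^k x'_i$ and $\sum_{i=1}^k x_i$ coincide, so there is nothing to prove; for $i \leqslant k < j$ one needs $\sum_{\ell=i}^{k} a \geqslant \sum_{\ell=i}^{k} x_\ell$, i.e.\ the running average of the first $k-i+1$ entries of the block is at most $a$. This is where I'd invoke that $j$ is the \emph{first} index where the sequence goes \emph{up}: on the block the original entries $x_i, \dots, x_{j-1}$ are non-increasing while $x_j$ is the largest, so any proper prefix-average of $(x_i,\dots,x_j)$ exceeds $a$ — actually one wants the opposite inequality, so the key observation is that since $x_i \geqslant x_{i+1} \geqslant \dots \geqslant x_{j-1}$ and $x_j$ is strictly the biggest, removing $x_j$ from the average \emph{raises} it, hence a prefix not containing $x_j$ has average $\geqslant a$... hmm — the correct reading is that prefix averages are \emph{decreasing towards} $a$ isn't guaranteed, so I will argue directly via the standard ``pool-adjacent-violators'' fact that for a block chosen to be a maximal level set of the greatest-lower-bound construction, all prefix sums of the original dominate... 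Let me just say: the inequality $\sum_{\ell=i}^k x_\ell \leqslant (k-i+1)a$ holds because $x_{j}$ being the unique up-step forces the average of $\{x_i,\dots,x_j\}$ to be pulled up precisely by the last term, making every prefix-average (of terms $i$ through $k$, $k<j$) at most $a$ — equivalently $x_{k+1} + \dots + x_j \geqslant (j-k)a$, which rearranges correctly since those tail terms include the large $x_j$ and the block was chosen minimally. I'd verify this rearrangement concretely.

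Finally, (iii), the optimality: suppose $\y \in \mathds{P}_n^{d,\,\downarrow}$ satisfies $\sum_{i=1}^k x_i \leqslant \sum_{i=1}^k y_i$ for all $k$. I want $\mathcal{F}(\x) \prec \y$, i.e.\ $\sum_{i=1}^k x'_i \leqslant \sum_{i=1}^k y_i$ for all $k$ (equality of total sums is automatic). Outside the block this is immediate from the hypothesis since $x'_i = x_i$ there. For $i \leqslant k < j$ write $\sum_{i=1}^k x'_i = \sum_{i=1}^{i-1} x_i + (k-i+1)a$. The idea is a convexity/interpolation argument: because $a = \sum_{\ell=i}^j x_\ell / (j-i+1)$, the value $\sum_{i=1}^{k} x'_i$ is a convex combination (along $k$) of the two endpoint partial sums $\sum_{i=1}^{i-1} x_i$ (at $k=i-1$) and $\sum_{i=1}^{j} x_i$ (at $k=j$), both of which are $\leqslant$ the corresponding $\sum y$; then since $\y$ is non-increasing, its partial-sum function is concave in $k$, so it lies above the chord, yielding the bound for intermediate $k$. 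This concavity-of-$\y$-versus-linearity-of-$x'$ comparison is the crux, and it is exactly where the hypothesis $\y \in \mathds{P}_n^{d,\,\downarrow}$ is used essentially. Composing over the (finitely many) iterations of $\mathcal{F}$ needed to reach a fully sorted vector, and using transitivity of majorization together with the fact that each single step preserves both the domination-by-$\y$ property (just proved) and membership in $\mathds{P}_n^{d}$, completes the proof. \textbf{The main obstacle} I anticipate is getting the boundary-index bookkeeping in steps (i) and (ii) exactly right — precisely, showing the averaged block neither creates a new ascent at its left edge nor violates monotonicity at its right edge, and that the prefix-average inequality in (ii) is correctly oriented — since the definitions of $i$ (greatest with $x_{i-1}\geqslant a$) and $j$ (smallest ascent) interact subtly; the convexity argument in (iii), by contrast, is conceptually clean once the single-step picture is fixed.
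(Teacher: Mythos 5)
First, note that the paper itself does not prove this lemma: it quotes it as Cicalese--Vaccaro's result and refers the $\mathds{P}_{n}^{d}$ generalization to an external reference, so your attempt has to stand on its own. Your overall architecture is the right one (analyze a single averaging step, show it preserves the partial-sum domination and the total sum, iterate until no ascent remains, conclude by transitivity), and your argument for part (iii) is genuinely correct and clean: on the averaged block, $k \mapsto \sum_{\ell\leqslant k} x'_{\ell}$ is affine between $k=i-1$ and $k=j$, its two endpoint values are original partial sums of $\x$ and hence dominated by those of $\y$, and concavity of $k \mapsto \sum_{\ell\leqslant k} y_{\ell}$ (this is exactly where $\y \in \mathds{P}_{n}^{d,\,\downarrow}$ enters) puts the latter above the chord. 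That single-step fact, applied at every iteration, is all one needs to propagate domination by $\y$ to the terminal (sorted) vector.

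However, two of your supporting steps are flawed as stated. (a) The right-edge claim in (i) is false: one can have $a < x_{j+1}$, so a single application of $\mathcal{F}$ can create a new ascent just after the block. For $\x = (1.2,\,0.1,\,0.4,\,0.3) \in \mathds{P}_{2}^{4}$ one gets $j=3$, $i=2$, $a=0.25$, and the output $(1.2,\,0.25,\,0.25,\,0.3)$ has a fresh ascent at position $4$; in particular your induction measure (the number of ascents) does not decrease here — it stays at one — so your termination argument fails. The repair is easy: a single step makes the prefix $1,\dots,j$ non-increasing (the left edge is safe by the choice of $i$), so the index of the first ascent strictly increases and the iteration stops after at most $d-1$ steps, at which point monotonicity holds by definition of stopping; no claim $a \geqslant x_{j+1}$ is needed. (b) The key inequality of (ii), $\sum_{\ell=i}^{k} x_{\ell} \leqslant (k-i+1)a$ for $i \leqslant k < j$, is asserted but never derived, and the feature you invoke (minimality of $j$, ``block chosen minimally'') is not what yields it; the maximality of $i$ is. Indeed, maximality means $x_{i'-1} < \mathrm{avg}(x_{i'},\dots,x_{j})$ for every $i < i' \leqslant j-1$, which forces the suffix averages $\mathrm{avg}(x_{i'},\dots,x_{j})$ to increase strictly in $i'$ (and $\mathrm{avg}(x_{j-1},x_{j}) < x_{j}$ since $x_{j} > x_{j-1}$); hence every tail $x_{k+1},\dots,x_{j}$ has average strictly above $a$, which is equivalent to every head $x_{i},\dots,x_{k}$ having average at most $a$. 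With (a) and (b) repaired, the composition you outline does go through and gives a correct self-contained proof.
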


\noindent We stress here that the original statement of flatness process $\mathcal{F}$, including its definition and lemma \ref{flatnessprocesslemma}, introduced in \cite{992785} is only designed for the set $\mathds{P}_{1}^{d}$, i.e. probability simplex. However, its generalization for vectors in $\mathds{P}_{n}^{d}$, i.e. lemma \ref{flatnessprocesslemma}, is also valid. The corresponding proof was given in our recent work \cite{yuan2019strong}. 

All these properties mentioned above lead to a standard approach in finding the optimal bounds for a subset $S$ of $\mathds{P}_{n}^{d, \, \downarrow}$. Formally, let us consider $S \subset \mathds{P}_{n}^{d, \, \downarrow}$, and then there are two steps in constructing its GLB $\wedge S$ and LUB $\vee S$. The first step is to find the quantities $a_{k}$ and $b_{k}$, which are defined as
\begin{align}
a_{k} 
&:= 
\left( \min_{\x \in S} \sum_{i=1}^{k} x_{i} \right) -
\sum_{i=1}^{k-1} a_{i}, \notag\\
b_{k} 
&:= 
\left( \max_{\x \in S} \sum_{i=1}^{k} x_{i} \right) -
\sum_{i=1}^{k-1} b_{i},
\label{abbbdefinition}
\end{align}
for $1 \leqslant k \leqslant d$. It is immediate to observe that the vector $\ab_{S} := (a_{k})_{k} \in \mathds{P}_{n}^{d, \, \downarrow}$. On the other hand, the vector $\bb_{S} := (b_{k})_{k}$ might not always belongs to the set $\mathds{P}_{n}^{d, \, \downarrow}$. To give our reader some intuition, we recall the example constructed in \cite{992785}. 

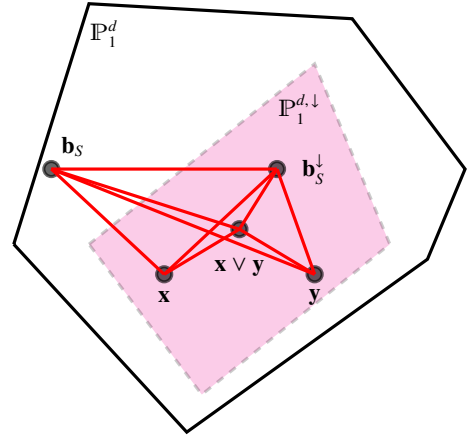
\begin{figure}[h]
\centering
\begin{tikzpicture}

\draw[very thick,dashed,-,fill=magenta,opacity=0.2] (-2,0) -- (1,2.4) -- (2,0) -- (-0.5,-2) -- (-2,0);
\draw[very thick,-] (-3,0) -- (-2,3.2) -- (1.5,3) -- (3,1) -- (2.5,-0.2) -- (-0.7,-2.5) -- (-3,0);

\draw[very thick,fill=black,opacity=0.6] (-1,-0.4) circle (0.1);
\node[] at (-1,-0.7) {$\x$};
\draw[very thick,fill=black,opacity=0.6] (1,-0.4) circle (0.1);
\node[] at (1,-0.7) {$\y$};
\draw[very thick,fill=black,opacity=0.6] (0,0.2) circle (0.1);
\node[] at (0,-0.3) {$\x \vee \y$};
\draw[very thick,red,-] (0,0.2) -- (-1,-0.4);
\draw[very thick,red,-] (0,0.2) -- (1,-0.4);
\draw[very thick,fill=black,opacity=0.6] (0.5,1) circle (0.1);
\node[] at (1,1) {$\bb_{S}^{\downarrow}$};
\draw[very thick,red,-] (0.5,1) -- (-1,-0.4);
\draw[very thick,red,-] (0.5,1) -- (1,-0.4);
\draw[very thick,red,-] (0.5,1) -- (0,0.2);
\node[] at (0.8,1.8) {$\mathds{P}_{1}^{d, \, \downarrow}$};
\node[] at (-1.8,2.8) {$\mathds{P}_{1}^{d}$};
\draw[very thick,fill=black,opacity=0.6] (-2.5,1) circle (0.1);
\node[] at (-2.2,1.3) {$\bb_{S}$};
\draw[very thick,red,-] (-2.5,1) -- (-1,-0.4);
\draw[very thick,red,-] (-2.5,1) -- (1,-0.4);
\draw[very thick,red,-] (-2.5,1) -- (0,0.2);
\draw[very thick,red,-] (-2.5,1) -- (0.5,1);

\end{tikzpicture}
\caption{(color online) Schematic illustration of the lattice structure exhibited in example \ref{example} excluding the GLB. Each point stands for an element, and the red line represents the binary relation ``$\prec$'' between elements. In this plot, a lower point is majorized by the higher point whenever they are connected with a red line. Obviously, here $\bb_{S}^{\downarrow} \prec \bb_{S}$ and $\bb_{S} \prec \bb_{S}^{\downarrow}$, but $\bb_{S} \neq \bb_{S}^{\downarrow}$.}
\label{lattice}
\end{figure}

\begin{example}\label{example}
Take $S = \{ \x, \y \}$ with
\begin{align}
\x &= (0.6, 0.15, 0.15, 0.1),\notag\\
\y &= (0.5, 0.25, 0.2, 0.05).
\end{align}
Then in this case $\bb_{S} = (0.6, 0.15, 0.2, 0.05)$, which does not belong to the set $\mathds{P}_{1}^{d, \, \downarrow}$ since $b_{2} = 0.15 < b_{3} = 0.2$. Actually, even though we rearrange the vector $\bb_{S}$ into non-increasing order $\bb_{S}^{\downarrow} = (0.6, 0.2, 0.15, 0.05)$, $\bb_{S}^{\downarrow}$ is not the optimal upper bound, i.e. $\bb_{S}^{\downarrow} \neq \vee S$, since in this case, 
\begin{align}
\vee S = \x \vee \y = \mathcal{F} (\bb_{S}) = (0.6, 0.175, 0.175, 0.05).
\end{align}
\end{example}

In general, the second step in constructing the optimal bounds for $S$ with majorization is to keep $\ab_{S}$ fixed and apply the flatness process $\mathcal{F}$ to $\bb_{S}$. Formally, our corollary \ref{majorizationlattice} and lemma \ref{flatnessprocesslemma} imply the optimality of $\ab_{S}$ and $\mathcal{F} (\bb_{S})$

\begin{cor}
\label{optimalmajorization}
For any nonempty subset $S \subset \mathds{P}_{n}^{d, \, \downarrow}$, its GLB $\wedge S$ and LUB $\vee S$ under majorization are given by
\begin{align}
\wedge S &= \ab_{S},\notag\\
\vee S &= \mathcal{F} (\bb_{S}),
\end{align}
with $\mathcal{F}$ stands for the flatness process defined in definition \ref{flatnessprocessdefinition}, and $\ab_{S}$, $\bb_{S}$ are defined in (\ref{abbbdefinition}).
\end{cor}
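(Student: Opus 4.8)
The plan is to verify directly that $\ab_S$ and $\mathcal{F}(\bb_S)$ satisfy the defining universal properties of, respectively, the greatest lower bound and the least upper bound of $S$ under majorization. Existence and uniqueness of $\wedge S$ and $\vee S$ are already guaranteed by Corollary~\ref{majorizationlattice}, so it suffices to show that each candidate is a bound of the correct type \emph{and} dominates (resp. is dominated by) every competing bound; uniqueness then forces the identification.

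For the greatest lower bound I would first record the telescoping identity $\sum_{i=1}^{k}(\ab_S)_i=\min_{\x\in S}\sum_{i=1}^{k}x_i=:f(k)$, which is immediate from (\ref{abbbdefinition}). Next I check $\ab_S\in\mathds{P}_n^{d,\,\downarrow}$: for each fixed $\x\in S$ the partial-sum function $k\mapsto\sum_{i=1}^{k}x_i$ is non-decreasing (as $x_i\ge 0$) and concave (its increments $x_k$ are non-increasing), and a pointwise minimum of non-decreasing concave functions is again non-decreasing and concave; since $f(0)=0$ and $f(d)=n$, the increments $a_k=f(k)-f(k-1)$ are therefore non-negative, non-increasing, and sum to $n$. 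Then $\ab_S$ is a lower bound, because $\sum_{i=1}^{k}(\ab_S)_i=f(k)\le\sum_{i=1}^{k}x_i$ for every $\x\in S$ and all $k$, with equal total sums, i.e. $\ab_S\prec\x$. Finally, if $\y\in\mathds{P}_n^{d,\,\downarrow}$ is any lower bound of $S$, then $\sum_{i=1}^{k}y_i\le\sum_{i=1}^{k}x_i$ for all $\x$ and $k$, hence $\sum_{i=1}^{k}y_i\le f(k)=\sum_{i=1}^{k}(\ab_S)_i$, so $\y\prec\ab_S$. By uniqueness of the greatest lower bound, $\wedge S=\ab_S$.

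For the least upper bound I would run the symmetric argument, inserting the flatness process to repair the loss of monotonicity. Setting $g(k):=\max_{\x\in S}\sum_{i=1}^{k}x_i$, the same telescoping gives $\sum_{i=1}^{k}(\bb_S)_i=g(k)$, and since $g$ is non-decreasing with $g(d)=n$ one only obtains $\bb_S\in\mathds{P}_n^{d}$ — it need not be ordered, as Example~\ref{example} shows. Applying Lemma~\ref{flatnessprocesslemma} to $\bb_S$ gives $\mathcal{F}(\bb_S)\in\mathds{P}_n^{d,\,\downarrow}$ with $\sum_{i=1}^{k}(\mathcal{F}(\bb_S))_i\ge\sum_{i=1}^{k}(\bb_S)_i=g(k)\ge\sum_{i=1}^{k}x_i$ for every $\x\in S$; since all total sums equal $n$ this reads $\x\prec\mathcal{F}(\bb_S)$, so $\mathcal{F}(\bb_S)$ is an upper bound. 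For optimality, if $\y\in\mathds{P}_n^{d,\,\downarrow}$ is any upper bound of $S$ then $\sum_{i=1}^{k}x_i\le\sum_{i=1}^{k}y_i$ for all $\x$, hence $\sum_{i=1}^{k}(\bb_S)_i=g(k)\le\sum_{i=1}^{k}y_i$ for all $k$, and the ``moreover'' clause of Lemma~\ref{flatnessprocesslemma} (with $\bb_S$ in the role of $\x$) yields precisely $\mathcal{F}(\bb_S)\prec\y$. By uniqueness of the least upper bound, $\vee S=\mathcal{F}(\bb_S)$.

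The routine parts (telescoping, concavity of partial sums) are painless; the step that genuinely needs the earlier machinery is the least upper bound, where $\bb_S$ can fail to be non-increasing and where, as Example~\ref{example} already illustrates, even the reordering $\bb_S^{\downarrow}$ is in general strictly above $\vee S$. The crucial input there is the nontrivial optimality property packaged in Lemma~\ref{flatnessprocesslemma}: that $\mathcal{F}(\bb_S)$ is the majorization-smallest element of $\mathds{P}_n^{d,\,\downarrow}$ whose partial sums still dominate those of $\bb_S$. One bookkeeping remark: if $S$ is not closed, so the extrema in (\ref{abbbdefinition}) are not attained, one simply replaces $\min$/$\max$ by $\inf$/$\sup$; all the inequalities above are unaffected, and since $S$ lies inside the bounded set $\mathds{P}_n^{d,\,\downarrow}$ everything stays finite.
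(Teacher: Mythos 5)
Your proposal is correct and follows essentially the same route as the paper's proof: existence and uniqueness from Corollary~\ref{majorizationlattice}, a direct verification that $\ab_{S}$ majorizes every lower bound while being one itself, and the ``moreover'' clause of Lemma~\ref{flatnessprocesslemma} to show $\mathcal{F}(\bb_{S})$ is the least upper bound. Your extra remarks (the concavity argument showing $\ab_{S}\in\mathds{P}_{n}^{d,\,\downarrow}$, and the $\inf/\sup$ caveat for non-closed $S$) are sound refinements of steps the paper treats as immediate, not a different method.
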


\begin{proof}
Here the existence of $\wedge S$ and $\vee S$ for $S$ are guaranteed by corollary \ref{majorizationlattice}. We first prove $\wedge S = \ab_{S}$. By hypothesis, for any vector $\cb \in \mathds{P}_{n}^{d, \, \downarrow}$ such that
\begin{align}
\cb \prec S,
\end{align}
we have
\begin{align}
\sum_{i=1}^{k} c_{i} 
\leqslant 
\sum_{i=1}^{k} a_{i},
\end{align}
for all $1\leqslant k \leqslant d$, and thus
\begin{align}
\cb \prec \ab_{S}.
\end{align}
In particular, by choosing $\cb$ as $\wedge S$, we obtain $\wedge S \prec \ab_{S}$. By the definition of $\ab_{S}$, we have $\ab_{S} \prec S$,
and hence $\ab_{S} \prec \wedge S$. Thus, $\wedge S = \ab_{S}$. The equation holds since majorization has the property of antisymmetricity on $\mathds{P}_{n}^{d, \, \downarrow}$, with both $\ab_{S}$ and $\wedge S$ belonging to the set $\mathds{P}_{n}^{d, \, \downarrow}$.

Next we move on to show $\vee S = \mathcal{F} (\bb_{S})$. By hypothesis, for any vector $\db \in \mathds{P}_{n}^{d, \, \downarrow}$ such that
\begin{align}
S \prec \db,
\end{align}
we have
\begin{align}
\sum_{i=1}^{k} b_{i} 
\leqslant 
\sum_{i=1}^{k} d_{i},
\end{align}
for all $1\leqslant k \leqslant d$. Now by using lemma \ref{flatnessprocesslemma} directly, we get 
\begin{align}
\mathcal{F} (\bb_{S}) \prec \db.
\end{align}
as expected. Note that $\mathcal{F} (\bb_{S}) \prec \bb_{S}$ since $\sum_{i=1}^{k} b_{i} \leqslant \sum_{i=1}^{k} b_{i}$. In particular, by choosing $\db$ as $\vee S$, we obtain $\mathcal{F} (\bb_{S}) \prec \vee S$. By using the fact that $\sum_{i=1}^{k} x_{i} \leqslant \sum_{i=1}^{k} x_{i}^{\prime}$ for all $1 \leqslant k \leqslant d$, and $\x \in S$, we have $S \prec \mathcal{F} (\bb_{S})$, and hence $\vee S \prec \mathcal{F} (\bb_{S})$. Thus, $\vee S = \mathcal{F} (\bb_{S})$. The equation holds since majorization has the property of antisymmetricity on $\mathds{P}_{n}^{d, \, \downarrow}$, with both $\mathcal{F} (\bb_{S})$ and $\vee S$ belonging to the set $\mathds{P}_{n}^{d, \, \downarrow}$.
\end{proof}

As an application of our corollary \ref{optimalmajorization}, take $S$ as $S_{\oplus}^{\downarrow} \subset \mathds{P}_{2}^{d, \, \downarrow}$, which immediately yields $\bb_{S_{\oplus}^{\downarrow}} = \sss$ defined in (\ref{uuroplus}) from our main text. Therefore, $\mathcal{F} (\sss) = \mathcal{F} (\bb_{S_{\oplus}^{\downarrow}}) = \vee S_{\oplus}^{\downarrow}$ is the optimal upper bound for UURs for all quantum processes in the form of direct-sum. Formally

\begin{cor}\label{uuropluscorop}
For probability vectors $\p$ and $\q$ obtained by measuring $\mathrm{\Psi}$ with respect to $\mathcal{T}_{1}:=( \rho^{R A}, M )$ and $\mathcal{T}_{2}:=( \sigma^{R A}, N )$, their joint uncertainties in terms of $\p \oplus \q$ is therefore bounded by a vector independent of quantum process $\mathrm{\Psi}$ of the form
\begin{align}\label{uuroplusop}
\p\oplus\q &\prec  
\mathcal{F} (\sss) 
=
\mathcal{F} \left( s_{1} , s_{2} - s_{1} , s_{3} - s_{2} , \ldots, 0 \right).
\end{align}
Here $\mathcal{F}$ is the flatness process defined in definition \ref{flatnessprocessdefinition}, $\mathcal{F} (\sss)$ is the optimal bound for $\p\oplus\q$, and each $s_{k}$ is a functional of the conditional min-entropy
\begin{align}
s_{k} := \max\limits_{\mathcal{I}_{k}} 2^{ -\mathrm{H}_{\min} (B|A)_{ G\left( \mathcal{I}_{k} \right) } },
\end{align}
where the maximum is over all subset $\mathcal{I}_{k}$, and the conditional min-entropy for $G\left( \mathcal{I}_{k} \right)$ is defined as
\begin{align}
\mathrm{H}_{\min} (B|A)_{G\left(\mathcal{I}_{k}\right)} := - \log \inf\limits_{X^{A} \geqslant 0} \left\{ \Tr \left(X^{A} \right) | X^{A} \otimes \mathds{1}^B \geqslant G\left(\mathcal{I}_{k}\right) \right\}.
\end{align}
\end{cor}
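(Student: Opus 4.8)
The plan is to combine Theorem~\ref{uuroplusthm}, which already gives $\p\oplus\q\prec\sss$ for every quantum process, with the lattice machinery of Corollary~\ref{optimalmajorization}, which identifies the least upper bound of a subset of $\mathds{P}_{n}^{d,\,\downarrow}$ under majorization as the flatness process applied to the partial-sum vector $\bb_{S}$. The only new ingredient required is to recognize that the vector $\sss$ assembled from the conditional min-entropies is exactly the vector $\bb_{S}$ of (\ref{abbbdefinition}) for the set of attainable joint distributions, so that $\mathcal{F}(\sss)=\vee S$ is both an upper bound and the optimal one.

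First I would fix $d=m+n$ and let $S_{\oplus}^{\downarrow}$ be the set of non-increasing rearrangements of all vectors $\p\oplus\q$ obtainable by performing $\mathcal{T}_{1}$ and $\mathcal{T}_{2}$ on some quantum process $\mathrm{\Psi}$. Since $\|\p\oplus\q\|_{1}=\|\p\|_{1}+\|\q\|_{1}=2$ and at least one quantum process exists, $S_{\oplus}^{\downarrow}$ is a nonempty subset of $\mathds{P}_{2}^{d,\,\downarrow}$. Corollary~\ref{majorizationlattice} then guarantees the existence of $\vee S_{\oplus}^{\downarrow}$, and Corollary~\ref{optimalmajorization} evaluates it as $\vee S_{\oplus}^{\downarrow}=\mathcal{F}(\bb_{S_{\oplus}^{\downarrow}})$ with $\bb_{S_{\oplus}^{\downarrow}}$ defined by the partial-sum recursion (\ref{abbbdefinition}).

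The heart of the argument is the identity $\bb_{S_{\oplus}^{\downarrow}}=\sss$. Telescoping (\ref{abbbdefinition}) gives $\sum_{i=1}^{k}b_{i}=\max_{\x\in S_{\oplus}^{\downarrow}}\sum_{i=1}^{k}x_{i}$, i.e. the maximum over all processes of the sum of the $k$ largest components of $\p\oplus\q$. But the sum of the $k$ largest components of $\p\oplus\q$ equals $\max_{|R|+|S|=k}\bigl(\sum_{x\in R}p_{x}+\sum_{y\in S}q_{y}\bigr)$, and the chain of equalities already derived in the proof of Theorem~\ref{uuroplusthm}---rewriting this as $\max_{\mathcal{I}_{k}}\max_{\mathrm{\Psi}}\Tr[G(\mathcal{I}_{k})J_{\mathrm{\Psi}}^{AB}]$, evaluating the inner SDP by (\ref{minentropy}), and taking the outer maximum---shows it equals $s_{k}$. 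Hence $\sum_{i=1}^{k}b_{i}=s_{k}$ for every $k$, so $b_{k}=s_{k}-s_{k-1}$ and $\bb_{S_{\oplus}^{\downarrow}}=(s_{1},s_{2}-s_{1},s_{3}-s_{2},\ldots,0)=\sss$.

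Combining the last two paragraphs gives $\vee S_{\oplus}^{\downarrow}=\mathcal{F}(\sss)$. Since the non-increasing rearrangement of $\p\oplus\q$ lies in $S_{\oplus}^{\downarrow}$ and the least upper bound dominates every element of the set, $\p\oplus\q\prec\mathcal{F}(\sss)$, which is the claimed relation (\ref{uuroplusop}). For optimality, if $\db$ is any $\mathrm{\Psi}$-independent vector with $\p\oplus\q\prec\db$ for all quantum processes, then $S_{\oplus}^{\downarrow}\prec\db$, so the defining property of the least upper bound forces $\mathcal{F}(\sss)=\vee S_{\oplus}^{\downarrow}\prec\db$; thus $\mathcal{F}(\sss)$ is the tightest such bound. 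I expect the main obstacle to be bookkeeping rather than a new estimate: one must carefully justify that ``the sum of the $k$ largest entries of $\p\oplus\q$, maximized over processes'' coincides with the quantity $s_{k}$ defined via $\max_{\mathcal{I}_{k}}$, and verify that the hypotheses of Corollaries~\ref{majorizationlattice} and~\ref{optimalmajorization} (nonemptiness and membership in $\mathds{P}_{2}^{d,\,\downarrow}$) are genuinely satisfied.
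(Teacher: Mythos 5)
Your proposal is correct and follows essentially the same route as the paper's own argument: it invokes Theorem~\ref{uuroplusthm} (via the SDP/min-entropy evaluation of the maximal top-$k$ partial sums), identifies $\bb_{S_{\oplus}^{\downarrow}}=\sss$ through the telescoping of (\ref{abbbdefinition}), and then applies Corollary~\ref{optimalmajorization} to conclude $\p\oplus\q\prec\mathcal{F}(\sss)=\vee S_{\oplus}^{\downarrow}$ together with its optimality. The only difference is that you spell out explicitly the bookkeeping (nonemptiness, membership in $\mathds{P}_{2}^{d,\,\downarrow}$, and the partial-sum identity) that the paper compresses into ``immediately yields.''
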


It turns out that not only the optimal upper bound $\mathcal{F} (\bb_{S_{\oplus}^{\downarrow}})$ of $S_{\oplus}^{\downarrow} \subset \mathds{P}_{2}^{d, \, \downarrow}$, i.e. direct-sum UURs for quantum processes, can be evaluated explicitly by the means of SDP and flatness process, but also the optimal lower bound $\ab_{S_{\oplus}^{\downarrow}}$ of the reverse direct-sum UURs for quantum processes.


\subsection{Proof of Theorem \ref{uurotimesthm}}

In this section we turn our attention back to the UURs for quantum processes in the form of direct-product. We first consider the sum of the first $k$ largest components of $\p \otimes \q$, i.e.

\begin{align}
\label{directproductinequality}
\max_{ T_{k} } \max_{ \mathrm{\Psi} }
\left( \sum_{ (x, y) \in T_{k} } p_{x} q_{y} \right)
\leqslant
&\max_{ | R | + | S | = k + 1 } \max_{ \mathrm{\Psi} }
\left( \frac{ \sum_{ x \in R } p_{x} + \sum_{ y \in S } q_{y} } {2} \right)^{2} \notag\\
=
&\max_{ \mathcal{I}_{k+1} } \max_{ \mathrm{\Psi} }
\left( \frac{\Tr [ G\left(\mathcal{I}_{k+1}\right) 
J_{\mathrm{\Psi}}^{A B} ]  }{2} \right)^{2} \notag\\
=
&\left(\frac{s_{k+1}}{2}\right)^{2} \notag\\
=
&t_{k},
\end{align}
where the outer maximum is over all subsets $T_{k} \subset [m] \times [n]$ such that $|T_{k}|=k$, with $[m] := \{ 1, \ldots, m \}$ and $[n] := \{ 1, \ldots, n \}$. Therefore $\ttt$ provides an upper bound of UURs for quantum processes, which completes the proof of our theorem \ref{uurotimesthm}. 

Moreover by definition of $S_{\otimes}^{\downarrow} \subset \mathds{P}_{1}^{d, \, \downarrow}$, and the iterated application of corollary \ref{optimalmajorization}, we have that
\begin{align}
\ab_{S_{\otimes}^{\downarrow}} 
\prec \p \otimes \q \prec 
\mathcal{F} (\bb_{S_{\otimes}^{\downarrow}})
 = 
 \vee S_{\otimes}^{\downarrow}.
\end{align}
It holds also that 
\begin{align}
\mathcal{F} (\bb_{S_{\otimes}^{\downarrow}})
\prec
\bb_{S_{\otimes}^{\downarrow}}.
\end{align}
Hence, the bounds for $\p \otimes \q$ can be ordered as
\begin{align}
\ab_{S_{\otimes}^{\downarrow}} 
\prec \p \otimes \q \prec 
\mathcal{F} (\bb_{S_{\otimes}^{\downarrow}})
\prec
\bb_{S_{\otimes}^{\downarrow}}.
\end{align}
From (\ref{directproductinequality}), it turns out that the bound $\bb_{S_{\otimes}^{\downarrow}}$ is majorized by the one constructed in our main text, that is $\bb_{S_{\otimes}^{\downarrow}} \prec \ttt$. Note that the quantity $\max_{ T_{k} } \max_{ \mathrm{\Psi} } ( \sum_{ (x, y) \in T_{k} } p_{x} q_{y} )$ is exactly the sum of the first $k$ largest components of $\bb_{S_{\otimes}^{\downarrow}}$, and usually 
\begin{align}
\bb_{S_{\otimes}^{\downarrow}} \neq \bb_{S_{\otimes}^{\downarrow}}^{\downarrow},
\end{align}
i.e. $\bb_{S_{\otimes}^{\downarrow}} \notin \mathds{P}_{1}^{d, \, \downarrow}$. Similarly, we have $\ttt \neq \ttt^{\downarrow}$ in general, and hence $\ttt \notin \mathds{P}_{1}^{d, \, \downarrow}$ does not hold in general.

It is interesting to identify the sum of the first $k$ largest components of $\mathcal{F} (\ttt)$. Let us denote the $i$-th element of $\mathcal{F} (\ttt)$ as $[\mathcal{F} (\ttt)]_{i}$. Then we have
\begin{align}
\max_{ T_{k} } \max_{ \mathrm{\Psi} }
\left( \sum_{ (x, y) \in T_{k} } p_{x} q_{y} \right)
\leqslant
t_{k}
\leqslant
\sum_{i=1}^{k} [\mathcal{F} (\ttt)]_{i},
\end{align}
and hence from lemma \ref{flatnessprocesslemma} we arrive at the following expression
\begin{align}
\ab_{S_{\otimes}^{\downarrow}} 
\prec \p \otimes \q \prec 
\mathcal{F} (\bb_{S_{\otimes}^{\downarrow}})
\prec
\mathcal{F} (\ttt)
\prec
\ttt. 
\end{align}
If the quantum processes considered here are state-preparation channel, then this chain of bounds makes an improvement over previous results of UURs introduced in \cite{PhysRevLett.111.230401} since $\mathcal{F} (\ttt) \prec \ttt$. As a by-product, the optimal bound $\ab_{S_{\otimes}^{\downarrow}} $ of the reverse direct-prooduct UURs for quantum processes is also given.


\subsection{Conjecture}

In this section we give a conjecture on the Shannon entropic uncertainty relation for quantum processes. In particular, given two process-channel measurements $\mathcal{T}_{1}$ and $\mathcal{T}_{2}$, their
overlaps are defined by $c_{xy} ( \mathcal{T}_{1} , \mathcal{T}_{2} ) := \Vert \tilde{E}_{x}^{1/2}  \tilde{F}_{y}^{1/2} \Vert$ with $1 \leqslant x \leqslant m+1$ and $1 \leqslant y \leqslant n+1$, and the entropic uncertainty relations in the form of $\mathrm{H}_{\alpha} (\mathcal{T}_{1}) + \mathrm{H}_{\beta} (\mathcal{T}_{2})$, with $1/\alpha + 1/\beta = 2$, is lower-bounded by $-2\log c ( \mathcal{T}_{1} , \mathcal{T}_{2} ) = -2\log \max_{x,y} c ( \mathcal{T}_{1} , \mathcal{T}_{2} )$, which is shown in our main text. This bound is tight for the case with $1/\alpha + 1/\beta = 2$. However, we do not know whether this is also tight for the case with $\alpha = \beta = 1$. 

As a matter of convenience, let us rearrange the overlaps between $\mathcal{T}_{1}$ and $\mathcal{T}_{2}$ in non-increasing order, and denote the $k$ largest overlap as $c_{k} ( \mathcal{T}_{1} , \mathcal{T}_{2} )$, then $c ( \mathcal{T}_{1} , \mathcal{T}_{2} ) = c_{1} ( \mathcal{T}_{1} , \mathcal{T}_{2} )$. Now we have a chain of overlaps
\begin{align}
c_{1} ( \mathcal{T}_{1} , \mathcal{T}_{2} )
\geqslant
c_{2} ( \mathcal{T}_{1} , \mathcal{T}_{2} )
\geqslant
\cdots
\geqslant
c_{(m+1)(n+1)} ( \mathcal{T}_{1} , \mathcal{T}_{2} ),
\end{align}
and we would like to know whether the Shannon entropic uncertainty relation can be further improve to 
\begin{align}
&\mathrm{H} (\mathcal{T}_{1}) + \mathrm{H} (\mathcal{T}_{2}) \notag\\
\geqslant
&-2\log c_{1} ( \mathcal{T}_{1} , \mathcal{T}_{2} )
+
\sum_{k} (2-s_{2k}) \log\frac{c_{k} ( \mathcal{T}_{1} , \mathcal{T}_{2} )}{c_{k+1} ( \mathcal{T}_{1} , \mathcal{T}_{2} )},
\end{align}
with $s_{k}$ is defined in (\ref{sk}) of our main text. In fact, when the object of our study is state-preparation channel, then the validity of above entropic uncertainty relation is proved by replace $c_{k} ( \mathcal{T}_{1} , \mathcal{T}_{2} )$ with $c_{k} ( M , N )$ in \cite{Xiao_2016}.

We will finish by expounding the motivations of this conjecture. Firstly, the process-independent bound depends only on the process-channel measurements $\mathcal{T}_{1}$ and $\mathcal{T}_{2}$, and hence quantify the intrinsic incompatibility between them. However, in the context of incompatibility, the process-independent bound is by no reason only dependent on the largest overlap between $\mathcal{T}_{1}$ and $\mathcal{T}_{2}$, but not all overlaps. The incompatibility between them should be completely characterized by the set of all overlaps. Secondly, it is worth noting that the bound of entropic uncertainty relation could be directly used to prove cryptography security \cite{RevModPhys.89.015002}.


\newpage
~

\end{document}